\newtheorem{observation}[theorem]{Observation}
\def\reals{{\mathbb R}}
\def\eps{{\varepsilon}}
\newcommand{\old}[1]{{{}}}
\begin{document}

\title{Spanners under the Hausdorff and Fr{\'e}chet Distances}
\titlerunning{Spanners under the Hausdorff and Fr{\'e}chet Distances}
%
\author{Tsuri Farhana \and
Matthew J. Katz}

\authorrunning{T. Farhana and M. Katz}
%
\institute{Ben Gurion University of the Negev, Beer Sheva, Israel 
\email{\{tsurif@post.bgu.ac.il,matya@bgu.ac.il}\}}

\maketitle              

\begin{abstract}
We initiate the study of spanners under the Hausdorff and Fr{\'e}chet distances. 
We show that any $t$-spanner of a planar point-set $S$ is
a $\frac{\sqrt{t^2-1}}{2}$-Hausdorff-spanner and a $\min\{\frac{t}{2},\frac{\sqrt{t^2-t}}{\sqrt{2}}\}$-Fr{\'e}chet spanner.
We also prove that for any $t > 1$, there exist a set of points $S$ and an $\eps_1$-Hausdorff-spanner of $S$ and an $\eps_2$-Fr{\'e}chet-spanner of $S$, where $\eps_1$ and $\eps_2$ are constants, such that neither of them is a $t$-spanner.
\end{abstract}

\section{Introduction}
Let $S$ be a set of points in $\reals^d$. The \emph{Euclidean graph} over $S$, denoted $G_S$, is the complete graph over $S$, in which the weight of an edge $(u,v)$ is the Euclidean distance between its endpoints, denoted $d(u,v)$. A subgraph $H$ of $G_S$ is a {\em $t$-spanner} of $S$, for a real number $t \ge 1$, if it is a $t$-spanner of $G_S$, that is, if for any pair of points $u,v \in S$, the length of the shortest path between $u$ and $v$ in $H$ is at most $t\cdot d(u,v)$. In general, a path in $G_S$ between $u$ and $v$ whose length is at most $t\cdot d(u,v)$, is called a \emph{$t$-path}.

Geometric spanners, i.e., $t$-spanners of $G_S$, have been studied extensively over the years (see~\cite{NS09}), where the goal is often to construct a $t$-spanner, for a given $t > 1$, with some desirable properties, such as, small number of edges, small weight, small degree, and small diameter (i.e., the maximum number of edges in a minimum-hop $t$-path).

In this paper we initiate the study of Hausdorff-spanners and Fr{\'e}chet-spanners. 
Before defining these spanners, let us recall the definitions of the corresponding distances, i.e., Hausdorff~\cite{Hausdorff1914} (denoted $d_H$) and Fr\'echet~\cite{Frechet1906} (denoted $d_F$), adapted to our specific setting. The \emph{Hausdorff} distance between a segment $\overline{uv}$ and a polygonal path $P(u,v)$ between $u$ and $v$ is the maximum distance between a point $p$ on $P(u,v)$ and its closest point on $\overline{uv}$, that is, $d_H(P(u,v),\overline{uv}) = \max\{d(p,\overline{uv})\, |\, p \in P(u,v)\}$. The \emph{Fr\'echet} distance between $P(u,v)$ and $\overline{uv}$ is the minimum leash length, such that, a dog and its owner, both initially located at $u$, can walk along $P(u,v)$ and $\overline{uv}$, respectively, from $u$ to $v$, without backtracking.

We now define the notions of Hausdorff-spanner and Fr{\'e}chet-spanner. Let $\eps$ be a non-negative real number.  
A path $P(u,v)$ in $G_S$ between $u$ and $v$ is an \emph{$\eps$-Hausdorff-path} (\emph{$\eps$-Fr{\'e}chet-path}), if $d_H(P(u,v),\overline{uv}) \le \eps \cdot d(u,v)$
($d_F(P(u,v),\overline{uv}) \le \eps \cdot d(u,v)$).
Thus, a sub-graph $H$ of $G_S$ is an \emph{$\eps$-Hausdorff-spanner} (\emph{$\eps$-Fr{\'e}chet-spanner}), if 
there exists in $H$ an $\eps$-Hausdorff-path ($\eps$-Fr{\'e}chet-path) between any two points $u$ and $v$.

Let $S$ be a set of points in the plane. We show that a $t$-spanner is an $\eps_1$-Hausdorff-spanner, for $\eps_1=f(t)$, and an $\eps_2$-Fr{\'e}chet-spanner, for $\eps_2=f(t)$. We also prove that for any $t > 1$, there exist an $\eps_1$-Hausdorff-spanner $H_1$ and an $\eps_2$-Fr{\'e}chet-spanner $H_2$, where $\eps_1$ and $\eps_2$ are constants, such that both $H_1$ and $H_2$ are \emph{not} a $t$-spanner.

Finally, we show that a WSPD-based $t$-spanner is an $\eps$-Fr{\'e}chet-spanner, for $\eps=\frac{t-1}{2t+2}$. This bound is significantly better than the general one (mentioned above) when $\eps$ is close to zero. Since, to construct an $\eps$-Fr{\'e}chet-spanner, using an arbitrary $t$-spanner and thus the general bound, we need to set $t$ to $\frac{1+\sqrt{1+8\eps^2}}{2}$, while to do so using a WSPD-based $t$-spanner, we need to set $t$ to $\frac{1+2\eps}{1-2\eps}$. Notice that when $\eps$ is close to zero, the latter expression is much larger than the former one, which may be an advantage, since some of the spanner's properties, such as its number of edges, depend on $t$.

\section{Hausdorff spanners}
\label{sec:Hausdorff}

\subsection{$t$-spanners are $\eps$-Hausdorff-spanners}

In this section we show that a $t$-spanner is an $\eps$-Hausdorff-spanner, for $\eps = f(t)$.

Let $H$ be a $t$-spanner, $t > 1$, and let $u,v \in S$. We assume, without loss of generality, that $d(u,v)=1$ and that $u=(0,0)$ and $v=(1,0)$. Let $P(u,v)$ be a $t$-path in $H$ between $u$ and $v$. Then, $P(u,v)$'s length is at most $t$.

\begin{figure}[h]
    \centering
    \includegraphics[width=0.5\linewidth]{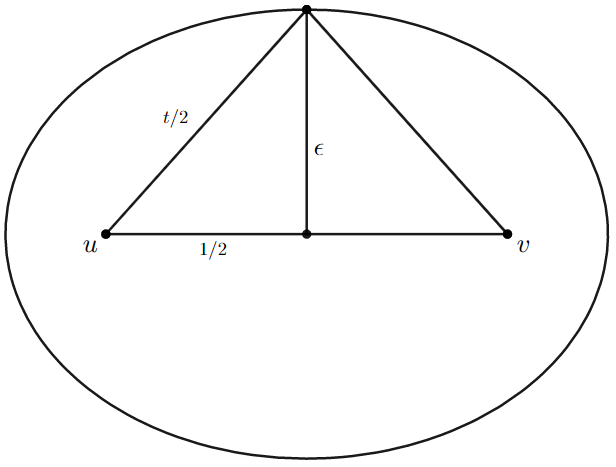}
    \caption{The furthest possible distance from a $t$-path to the segment between its endpoints.}
    \label{fig:max_haudorff_length}
\end{figure}

\begin{observation}
    \label{obs:t-path_in_ellipse}
    Let $E$ be the ellipse with foci points at $u$ and $v$, for which $d(p,u)+d(p,v)=t \cdot d(u,v)=t$, for every point $p$ on its boundary.  Then, $P(u,v) \subset E$. Moreover, the points of $E$ that are furthest from $\overline{uv}$ are the boundary points above and below the midpoint of $\overline{uv}$; their distance to $\overline{uv}$ is $\frac{\sqrt{t^2-1}}{2}$ (see Figure~\ref{fig:max_haudorff_length}).
\end{observation}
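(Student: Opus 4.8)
The plan is to prove the two assertions in turn. For the containment, I would argue pointwise: fix a point $p$ on $P(u,v)$ and split $P(u,v)$ at $p$ into its $u$-to-$p$ portion and its $p$-to-$v$ portion. Since a polygonal path is never shorter than the segment joining its endpoints (iterated triangle inequality), these two portions have lengths at least $d(u,p)$ and at least $d(p,v)$, respectively; hence $d(u,p)+d(p,v)$ is at most the total length of $P(u,v)$, which is at most $t\cdot d(u,v)=t$ because $P(u,v)$ is a $t$-path. Thus $p$ lies in the closed region bounded by $E$, and since $p$ was arbitrary, $P(u,v)\subset E$ (identifying $E$ with the region it bounds).

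For the second assertion I would first pin down $E$ explicitly. With $u=(0,0)$ and $v=(1,0)$, $E$ is centered at the midpoint $(\tfrac12,0)$ of $\overline{uv}$ with major axis along the $x$-axis; from $2a=t$ and $2c=d(u,v)=1$ we get semi-major axis $a=t/2$ and semi-minor axis $b=\sqrt{a^2-c^2}=\tfrac{\sqrt{t^2-1}}{2}$. The minor-axis endpoints are $(\tfrac12,\pm b)$, and since their $x$-coordinate $\tfrac12$ lies in $[0,1]$, their distance to the segment $\overline{uv}$ equals their distance to the $x$-axis, namely $b$.

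It then remains to check that no point of $E$ is farther from $\overline{uv}$ than $b$. Using the symmetry across both axes of the ellipse, I would restrict attention to the portion with $x\ge\tfrac12$ and $y\ge 0$. A point of $E$ with $x\le 1$ has distance $|y|\le b$ to $\overline{uv}$, so it is harmless. For a point of $E$ with $x>1$ the distance to $\overline{uv}$ is $\sqrt{(x-1)^2+y^2}$; substituting $y^2=b^2\bigl(1-(x-\tfrac12)^2/a^2\bigr)$ from the ellipse equation turns the squared distance into a quadratic in $x$ with positive leading coefficient $1-b^2/a^2=1/t^2$, so on the relevant $x$-interval its maximum is attained at an endpoint, i.e.\ at the point directly above $v$ or at the vertex $\bigl(\tfrac{1+t}{2},0\bigr)$. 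A direct evaluation at these two endpoints shows the squared distance is at most $b^2$, with equality only when $t=1$. This short computation is the only place any real calculation is needed, and it is the (mild) main obstacle; the rest is immediate.
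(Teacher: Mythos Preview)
The paper states this as an \emph{observation} and gives no proof, so there is nothing to compare your argument against. Your proof is correct. The containment via splitting $P(u,v)$ at $p$ and applying the triangle inequality to each subpath is exactly the standard justification, and your explicit identification of $a=t/2$, $c=1/2$, $b=\sqrt{t^2-1}/2$ is right.

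For the case $x>1$ you could streamline slightly: the vertex of your quadratic (the minimum, since the leading coefficient is $1/t^2>0$) sits at $x=(t^2+1)/2$, which for $t>1$ lies to the right of $(t+1)/2$; hence the quadratic is decreasing on $[1,(t+1)/2]$ and it suffices to evaluate at $x=1$, where the squared distance is $b^2(1-1/t^2)<b^2$. But checking both endpoints as you do is of course also fine.
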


\begin{lemma}
\label{lem:bound_for_hausdorff}
$P(u,v)$ is a $\frac{\sqrt{t^2-1}}{2}$-Hausdorff-path.
\end{lemma}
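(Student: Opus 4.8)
The plan is to derive the claim directly from Observation~\ref{obs:t-path_in_ellipse}. Since $P(u,v)$ is a $t$-path, every point $p$ on $P(u,v)$ satisfies $d(p,u)+d(p,v)\le t = t\cdot d(u,v)$, so $p$ lies inside (or on the boundary of) the ellipse $E$; hence $P(u,v)\subset E$. The Hausdorff distance $d_H(P(u,v),\overline{uv})$ is the maximum over $p\in P(u,v)$ of $d(p,\overline{uv})$, and since $P(u,v)\subset E$ this is at most $\max_{p\in E} d(p,\overline{uv})$. By the second part of the observation, that maximum equals $\tfrac{\sqrt{t^2-1}}{2}$, attained at the topmost and bottommost points of $E$. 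Therefore $d_H(P(u,v),\overline{uv})\le \tfrac{\sqrt{t^2-1}}{2} = \tfrac{\sqrt{t^2-1}}{2}\cdot d(u,v)$, which is exactly the definition of $P(u,v)$ being a $\tfrac{\sqrt{t^2-1}}{2}$-Hausdorff-path.

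The only point needing a little care is that the Hausdorff distance in this setting is one-directional: it measures how far points of the path stray from the segment, not vice versa, so it suffices to bound $d(p,\overline{uv})$ for $p\in P(u,v)$, and the containment $P(u,v)\subset E$ does exactly that. One should also note that $d(p,\overline{uv})\le d(p,\ell)$ where $\ell$ is the supporting line of $\overline{uv}$ (the $x$-axis here), and for a convex region like $E$ symmetric about both axes the farthest point from $\ell$ is a topmost/bottommost boundary point, which already lies in the vertical strip over $\overline{uv}$, so distance-to-segment and distance-to-line agree there; this is implicitly covered by the observation.

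There is essentially no obstacle here: the content is entirely in Observation~\ref{obs:t-path_in_ellipse}, which supplies both the containment and the geometric extremal computation (the semi-minor axis length of an ellipse with foci at distance $1$ and major axis $t$ is $\tfrac{\sqrt{t^2-1}}{2}$). The lemma is then a one-line consequence: monotonicity of the maximum under the set inclusion $P(u,v)\subseteq E$.
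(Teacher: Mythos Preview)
Your proof is correct and follows essentially the same approach as the paper: both invoke Observation~\ref{obs:t-path_in_ellipse} to bound $d(p,\overline{uv})$ for every $p\in P(u,v)$ by the semi-minor axis $\frac{\sqrt{t^2-1}}{2}$ of the ellipse. The paper additionally checks the reverse direction (every $q\in\overline{uv}$ is within $\frac{\sqrt{t^2-1}}{2}$ of some point on $P(u,v)$, via the intermediate-value argument on $x$-coordinates), which is redundant under its own one-sided definition of $d_H$ but makes the bound hold for the standard symmetric Hausdorff distance as well; your remark that only one direction is needed is consistent with the definition stated in the paper.
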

\begin{proof}
Since $d(u,v)=1$, we need to show that $d_H(P(u,v),\overline{uv}) \le \frac{\sqrt{t^2-1}}{2}$.  
On the one hand, by Observation~\ref{obs:t-path_in_ellipse}, the distance from any point $p$ on $P(u,v)$ to $\overline{uv}$ is at most $\frac{\sqrt{t^2-1}}{2}$. On the other hand, let $q=(q_x,0)$ be a point on $\overline{uv}$. Then, since $P(u,v)$ is a path between $u$ and $v$, there exists a point $p$ on $P(u,v)$ with $x$-coordinate $q_x$. Now, by Observation~\ref{obs:t-path_in_ellipse}, the distance from $q$ to $p$
is at most $\frac{\sqrt{t^2-1}}{2}$. We conclude that $d_H(P(u,v),\overline{uv}) \le \frac{\sqrt{t^2-1}}{2}$.  
\end{proof}

We have shown that any $t$-path in $H$ is a $\frac{\sqrt{t^2-1}}{2}$-Hausdorff-path, and therefore $H$ is a $\frac{\sqrt{t^2-1}}{2}$-Hausdorff-spanner. We conclude that

\begin{corollary}
Any t-spanner is an $\eps$-Hausdorff-spanner, for $\eps = \frac{\sqrt{t^2-1}}{2}$.

\end{corollary}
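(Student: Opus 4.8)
The plan is to read off the corollary directly from Lemma~\ref{lem:bound_for_hausdorff}. Fix an arbitrary pair $u,v \in S$. Since $H$ is a $t$-spanner, there is a $t$-path $P(u,v)$ in $H$ between $u$ and $v$. The first thing I would check is that the normalization made just before the lemma — translating, rotating, and scaling so that $u=(0,0)$ and $v=(1,0)$ — is genuinely without loss of generality: a rigid motion preserves all pairwise distances, and uniformly scaling by a factor $\lambda>0$ multiplies both $d_H(P(u,v),\overline{uv})$ and $d(u,v)$ by $\lambda$, so the ratio $d_H(P(u,v),\overline{uv})/d(u,v)$ is invariant under both operations. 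Hence Lemma~\ref{lem:bound_for_hausdorff}, proved for the normalized copy, transfers verbatim to the original pair.

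Applying the lemma, $P(u,v)$ is a $\frac{\sqrt{t^2-1}}{2}$-Hausdorff-path, i.e. $d_H(P(u,v),\overline{uv}) \le \frac{\sqrt{t^2-1}}{2}\cdot d(u,v)$. Since $u$ and $v$ were arbitrary, $H$ contains a $\frac{\sqrt{t^2-1}}{2}$-Hausdorff-path between every pair of its vertices, which is precisely the definition of a $\frac{\sqrt{t^2-1}}{2}$-Hausdorff-spanner. Taking $\eps = \frac{\sqrt{t^2-1}}{2}$ yields the statement. (For the degenerate case $t=1$ not covered by the lemma, $\eps=0$ and a $1$-spanner forces shortest paths of length exactly $d(u,v)$, which lie on $\overline{uv}$, so the bound holds trivially.)

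I do not expect any real obstacle here: the substantive work was already isolated in Observation~\ref{obs:t-path_in_ellipse} (the path is confined to the $t$-ellipse, whose extreme points are at distance $\frac{\sqrt{t^2-1}}{2}$ from $\overline{uv}$) and in Lemma~\ref{lem:bound_for_hausdorff}, which, crucially, bounds \emph{both} directions of the Hausdorff distance — every point of $P(u,v)$ is within $\frac{\sqrt{t^2-1}}{2}$ of $\overline{uv}$, and conversely every point $q=(q_x,0)$ of $\overline{uv}$ has some path point sharing its $x$-coordinate and hence within $\frac{\sqrt{t^2-1}}{2}$ of $q$. So once the invariance of the ratio under the normalization is noted, nothing remains to verify and the corollary follows immediately.
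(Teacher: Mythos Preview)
Your proposal is correct and matches the paper's approach exactly: the paper simply notes that Lemma~\ref{lem:bound_for_hausdorff} shows every $t$-path is a $\frac{\sqrt{t^2-1}}{2}$-Hausdorff-path, so $H$ is a $\frac{\sqrt{t^2-1}}{2}$-Hausdorff-spanner. Your added remarks on the scale-invariance of the ratio and the $t=1$ edge case are sound but go slightly beyond what the paper spells out.
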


\subsection{$\eps$-Hausdorff-spanners are not necessarily $t$-spanners}
\label{sec:Hausdorff_not_Euclid}
In this section we show that not every Hausdorff spanner is a Euclidean  spanner. More precisely, we present an infinite sequence of graphs, such that all of them are $c$-Hausdorff spanners, for some fixed constant $c > 0$, but for any $t \ge 1$, there exists a graph in the sequence that is not a $t$-spanner. Formally, we prove the following theorem.

\begin{theorem}
There exists a constant $c > 0$, such that for any $t\ge 1$, one can construct a graph that is a $c$-Hausdorff-spanner and is not a $t$-spanner.
\label{thm:Hausdorff_not_Euclid}
\end{theorem}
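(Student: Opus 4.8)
The plan is to build a "zig-zag" graph on a carefully chosen point set so that the only available path between the two extreme points stays geometrically close to the connecting segment (hence a small Hausdorff error) while being forced to be long in Euclidean length (hence a bad $t$-spanner). Concretely, place points $p_0,p_1,\dots,p_n$ roughly along the $x$-axis, where $p_i$ has $x$-coordinate $i/n$, and let their $y$-coordinates alternate between $+\delta$ and $-\delta$ for a small fixed constant $\delta>0$. Take the graph $H$ to be the path $p_0-p_1-\cdots-p_n$ together with, for every pair $p_i,p_j$ that are "internally close", only the zig-zag edges between them — i.e. we deliberately do \emph{not} include the long shortcut edges. The key point is to choose which edges to include so that between the endpoints $u=p_0$ and $v=p_n$ the shortest path in $H$ is essentially the full zig-zag, whose length is about $\sqrt{1+(2n\delta)^2}\cdot$(something growing), or more simply, by making the vertical oscillations have total variation proportional to $n$, the path length is $\Omega(n)$ while $d(u,v)\approx 1$. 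Letting $n\to\infty$ defeats any fixed $t$.

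The steps, in order, are: (1) Fix the construction and the constant $c$: I will take $c=\delta$ (or a small multiple of it), since every point of the zig-zag path lies within vertical distance $\delta$ of the segment $\overline{uv}$ on the $x$-axis, so the Hausdorff distance from the path to the segment is at most $\delta\cdot d(u,v)$, i.e. the path is a $\delta$-Hausdorff-path. More care is needed because the theorem demands that \emph{every} pair $u',v'\in S$ have a short Hausdorff path in $H$, not just the two extreme points; but any sub-segment of the zig-zag between $p_i$ and $p_j$ is itself contained in the horizontal strip of half-width $\delta$ around $\overline{p_ip_j}$, and since $d(p_i,p_j)$ is at least the $x$-gap, the Hausdorff ratio is still bounded by a constant — one checks the worst case is a pair of \emph{adjacent} points, where $d(p_i,p_j)=\sqrt{(1/n)^2+4\delta^2}$ and the path is the single edge (Hausdorff distance $0$), so that case is fine, and for non-adjacent pairs the ratio only improves. (2) Show $H$ is not a $t$-spanner: argue that the only $u$-to-$v$ path in $H$ (or the shortest one) traverses all the zig-zags, so its length is at least $\sum_i d(p_i,p_{i+1}) \ge 2\delta(n-1)$ roughly from the vertical components alone, giving dilation $\ge 2\delta(n-1)/d(u,v)\to\infty$. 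For this I need the graph to genuinely lack shortcuts; the cleanest choice is to let $H$ be \emph{only} the Hamiltonian zig-zag path $p_0-p_1-\cdots-p_n$, so there is a unique path between any two vertices and its length is forced. (3) Conclude: for the given $t$, pick $n$ with $2\delta(n-1)>t$, and this $H$ is a $\delta$-Hausdorff-spanner that is not a $t$-spanner.

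One subtlety to get right is making sure the Hausdorff bound is uniform over all pairs while the Euclidean dilation blows up. With $H$ taken to be just the zig-zag path: for a pair $p_i, p_j$ with $i<j$, the unique path is $p_i-p_{i+1}-\cdots-p_j$, all of which lies in the slab $\{|y|\le\delta\}$; since the segment $\overline{p_ip_j}$ also has all its points within vertical distance $\delta$ of the $x$-axis, and the path and segment share the same $x$-extent $[i/n,j/n]$, a short computation bounds $d_H(\text{path},\overline{p_ip_j})$ by $2\delta$; meanwhile $d(p_i,p_j)\ge (j-i)/n \ge 1/n$, so the ratio is at most $2\delta n$ — which is \emph{not} a fixed constant! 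This is the main obstacle: the short pairs are the problem. The fix is to scale $\delta$ with $n$, say $\delta = \delta(n) = c/(2n)$ for a fixed $c$; then for adjacent pairs the Hausdorff ratio is bounded (indeed the single-edge path has Hausdorff distance $0$), for a pair $j-i=k\ge 1$ the ratio is at most $2\delta/(k/n) = c/k \le c$, so $H$ is a $c$-Hausdorff-spanner uniformly; yet the dilation between $p_0$ and $p_n$ is at least $2\delta n / 1 = c$ — which no longer blows up. So the real fix is subtler still: I should keep $\delta$ a fixed constant for the \emph{endpoint} pair's sake but insert the oscillations only over a \emph{short} sub-interval, or better, use a self-similar/recursive zig-zag where at scale $k$ the amplitude is $c/2^k$ over intervals of length $1/2^k$, accumulating total path length $\Omega(n)$ while every sub-path between two vertices still has Hausdorff ratio $O(c)$. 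Balancing "amplitude large enough, relative to inter-point distance, to force length" against "amplitude small enough, relative to \emph{every} pair's distance, to keep Hausdorff small" is exactly the crux, and the recursive construction — amplitude proportional to local segment length at every scale — is the way I expect the authors resolve it.
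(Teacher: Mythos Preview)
Your exploration correctly pinpoints the central tension: the oscillation amplitude must be a fixed fraction of the \emph{local} scale so that (i) the Hausdorff ratio stays bounded for every pair, including nearby ones, and (ii) the total path length still grows without bound. Your first two attempts (fixed $\delta$; then $\delta=c/(2n)$) fail for exactly the reasons you diagnose, and you land on the right fix --- a self-similar recursive construction with amplitude proportional to segment length at every scale. That is precisely what the paper does: it uses the Koch-curve iterates $F_n$, where each segment of length $\ell$ is replaced by four segments of length $\ell/3$ with an equilateral bump of height $\tfrac{\sqrt{3}}{6}\ell$, giving path length $(4/3)^n\to\infty$ between the extreme vertices while keeping every local deviation proportional to local scale.

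However, your proposal stops at the insight and does not carry out the verification that the recursive construction is a $c$-Hausdorff-spanner for a fixed $c$; this is where the real work lies, and it is not automatic. The paper introduces a \emph{level} $l(v)$ for each vertex (the iteration at which it first appears) and, for a pair $u,v$, a level $l(u,v)=i$; it then proves two lemmas. First, $d_H(P(u,v),\overline{uv})\le \sqrt{3}\cdot 3^{-(i-1)}$, by trapping $P(u,v)$ inside a suitably defined bounding rectangle of diagonal at most $\sqrt{3}\cdot 3^{-(i-1)}$. Second, $d(u,v)\ge \tfrac{\sqrt{3}}{2}\cdot 3^{-i}$, via a three-case analysis (depending on whether a level-$(i{-}1)$ vertex lies on $P(u,v)$ and what its interior angle is) that separates $u$ and $v$ by a pair of parallel lines. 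Dividing yields $c=6$. Your sketch offers no mechanism for the second lemma, and this lower bound on $d(u,v)$ is the step most likely to fail for an ad hoc recursive zig-zag: two vertices can be far apart along the path yet geometrically close in the plane, and ruling this out uniformly is exactly what the level/parallel-lines argument accomplishes.
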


To prove the theorem, we define a sequence of graphs $F_0, F_1, F_2 \ldots$ as follows:
Let $F_0$ be the unit line segment with endpoints $(0,0)$ and $(1,0)$, that is, the endpoints are the vertices of $F_0$ and the segment is its single edge. Now, for any $n > 0$, we construct $F_n$ from $F_{n-1}$, by considering each segment (i.e., edge) $s$ of $F_{n-1}$ and (i) partitioning $s$ into three subsegments of equal length by adding two new vertices, (ii) forming an equilateral triangle with the middle subsegment as its base by adding a vertex on the outer side of the middle subsegment and connecting it to the endpoints of the middle subsegmet, and (iii) removing the middle subsegment. That is, the edge $s$ is replaced by four edges obtained by adding three new vertices; see Figure~\ref{fig:fractal}.
 
The curve that is obtained by applying this construction indefinitely is the fractal known as the Koch curve~\cite{Koch1904}; it is one of the three curves forming the Koch snowflake. It is well known that the length of the Koch curve is unbounded, that is, for every $l > 0$, there exists an integer $n$, such that the length of $F_n$, i.e., the sum of its edge lengths, is greater than $l$.

\begin{figure}[ht]
    \centering
    \includegraphics[width=0.5\linewidth]{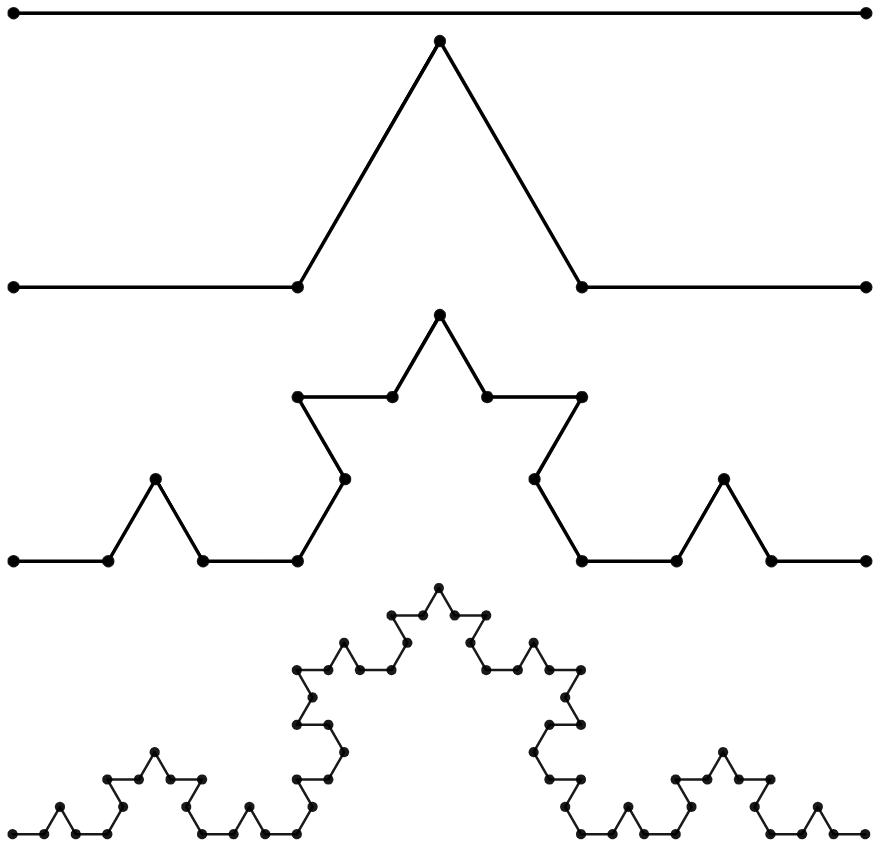}
    \caption{The graphs $F_0$, $F_1$, $F_2$ and $F_3$.}
    \label{fig:fractal}
\end{figure}

Since the length of the path between the extreme vertices of $F_n$, i.e., between its vertices at $(0,0)$ and $(1,0)$, can be made arbitrarily long, we conclude that for any $t \ge 1$, there exists an integer $n$, such that $F_n$ is not a $t$-spanner.

We next show that there exists a constant $c > 0$, such that for any $n \ge 0$, the graph $F_n$ is a $c$-Hausdorff spanner. We actually show that 6 is such a constant.

\paragraph*{Notation and definitions.}
Let $F_n = (V,E)$.
The \emph{level} of a vertex $v \in V$, denoted $l(v)$, is the smallest index $0 \le i \le n$, such that $v$ is already a vertex in $F_i$. If $l(v) \le i-1$, we write $l(v)=i^-$.

For a pair of vertices $u,v\in V$, we denote the path between $u$ and $v$ by $P(u,v)=(u,v_1,v_2,\ldots,v)$ and its corresponding sequence of levels by $P_l(u,v)$, that is, $P_l(u,v)=(l(u),l(v_1),l(v_2),\ldots,l(v))$. The \emph{level} of $u,v$, denoted $l(u,v)$, is now the smallest level $i$ such that there are at least two elements in the sequence $P_l(u,v)$ that are smaller or equal to $i$. For example, if $P_l(u,v)=(1,3,3,3,2,3,3,3)$, then $l(u,v)=2$.

The path from the leftmost vertex to the rightmost vertex induces a natural order on the vertices of $F_n$. We say that vertex $u$ of $F_n$ precedes/succeeds vertex $v$ of $F_n$ if $u$ appears before/after $v$ in this path.

Next, we define the \emph{bounding rectangle} of three consecutive vertices $v_1,v_2,v_3 \in V$; see Figure~\ref{fig:bounding_rect_def}. If the angle between $\overline{v_1v_2}$ and $\overline{v_2v_3}$ is $240^{\circ}$, then the bounding rectangle of $v_1,v_2,v_3$ is the rectangle such that (i) $v_1v_3$ is one of its diagonals and (ii) if $l(v_1) < l(v_3)$, the edge $\overline{v_1v_2}$ is contained in one of its long edges. Otherwise (i.e., $l(v_3) < l(v_1)$), the edge $\overline{v_2v_3}$ is contained in one of its long edges. If the angle between $\overline{v_1v_2}$ and $\overline{v_2v_3}$ is $60^{\circ}$, then the bounding rectangle of $v_1,v_2,v_3$ is the rectangle such that (i) one of its edges is $\overline{v_1v_3}$, and $v_2$ is on the opposite edge. Notice that the bounding rectangle is not always parallel to the axes, rather it is parallel to either $\overline{v_1v_2}$ or $\overline{v_2v_3}$, in the former case, or to the line segment $\overline{v_1v_3}$, in the latter case.

\begin{figure}[htb]
    \centering
    \includegraphics[width=0.5\linewidth]{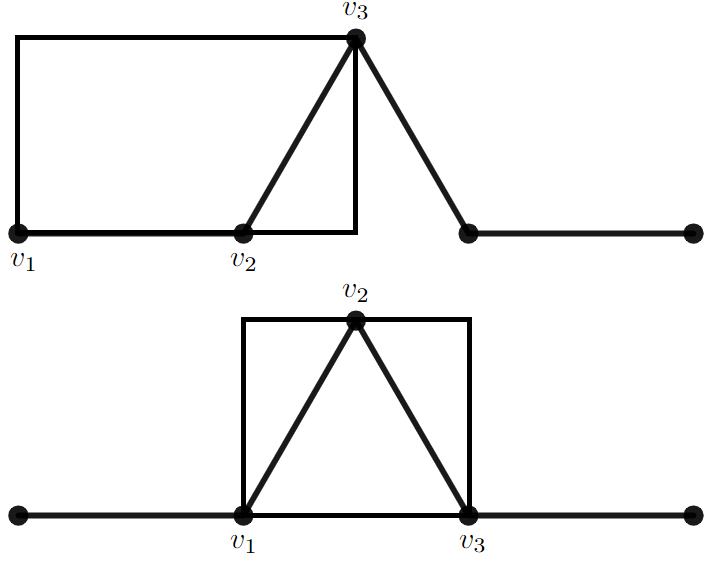}
    \caption{Top: The bounding rectangle when the angle between $\overline{v_1v_2}$ and $\overline{v_2v_3}$ is $240^{\circ}$.
    Bottom: The bounding rectangle when the angle between $\overline{v_1v_2}$ and $\overline{v_2v_3}$ is $60^{\circ}$}.
    \label{fig:bounding_rect_def}
\end{figure}

\begin{observation}
In the graph $F_n$, for any two adjacent vertices, at least one of them is of level $n$.
\end{observation}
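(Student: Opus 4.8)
The plan is to read the claim directly off the recursive definition of $F_n$; no cleverly chosen invariant is needed, only careful bookkeeping of which vertices and edges are created at step $n$.

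First I would record what happens when $F_n$ is built from $F_{n-1}$: each edge $s=\overline{ab}$ of $F_{n-1}$ is processed, and doing so introduces three new vertices — the two subdivision points $c,d$ of $s$ from step (i) and the apex $e$ of the equilateral triangle from step (ii) — together with exactly four new edges $\overline{ac}$, $\overline{ce}$, $\overline{ed}$, $\overline{db}$, the middle subsegment $\overline{cd}$ having been deleted in step (iii). Since $c,d,e$ do not occur in $F_{n-1}$, all three are of level $n$ by the definition of the level of a vertex.

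Next I would argue that these are \emph{all} the edges of $F_n$: the construction replaces every edge of $F_{n-1}$ by its four ``children'', so $F_n$ contains no edge of $F_{n-1}$, and hence every edge of $F_n$ is one of the four edges $\overline{ac},\overline{ce},\overline{ed},\overline{db}$ for some edge $\overline{ab}$ of $F_{n-1}$. Inspecting these four edges, each of them has at least one of $c,d,e$ as an endpoint, hence at least one endpoint of level $n$; for $n=0$ the statement is trivial since both endpoints of the single edge of $F_0$ are of level $0$. This proves the observation.

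The only point that requires a moment's care — the ``main obstacle'', such as it is — is the assertion that no edge of $F_{n-1}$ survives into $F_n$. This is precisely what steps (i) and (iii) guarantee: the endpoints $a,b$ of an edge $s$ of $F_{n-1}$ are never joined by an edge of $F_n$, being separated on the replacement path $a,c,e,d,b$ by three intermediate vertices, while the deleted middle subsegment $\overline{cd}$ is the only other candidate for a ``persisting'' edge and it is explicitly removed.
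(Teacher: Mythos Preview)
Your argument is correct; the paper states this as an observation with no accompanying proof, treating it as immediate from the recursive construction. Your write-up simply makes explicit what the paper leaves implicit, namely that every edge of $F_n$ is one of the four replacement edges for some edge of $F_{n-1}$ and each such edge has a newly created (hence level-$n$) endpoint.
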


\begin{observation}
    \label{obs:3_i_between_i-}
In the graph $F_n$ and for $1 \le i \le n$, let $v_1,v_2$ be two vertices that are adjacent to each other in the graph $F_{i-1}$. Then there are exactly three vertices of level $i$ between $v_1$ and $v_2$.

\end{observation}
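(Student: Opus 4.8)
The plan is to argue directly from the recursive definition of $F_0, F_1, \ldots, F_n$, essentially by induction on the number of construction steps performed after $F_{i-1}$.

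First I would record two facts that follow immediately from the construction. (a) Each $F_j$ is a simple polygonal path from $(0,0)$ to $(1,0)$, so that two vertices are adjacent in $F_j$ precisely when they are consecutive in the natural order; moreover, the step $F_{j-1} \to F_j$ only subdivides each edge, inserting new vertices strictly between that edge's two endpoints and never reordering previously existing vertices, so the natural order of $F_j$ restricted to the vertices of $F_{j-1}$ agrees with the natural order of $F_{j-1}$. (b) For $j \ge 1$, a vertex has level exactly $j$ if and only if it is one of the three vertices created when some edge of $F_{j-1}$ is processed in the step $F_{j-1} \to F_j$; in particular, no vertex of level $j$ is created in any later step.

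Now fix $v_1, v_2$ that are adjacent in $F_{i-1}$, so that $\overline{v_1 v_2}$ is an edge of $F_{i-1}$ and $l(v_1), l(v_2) \le i-1$. In the step $F_{i-1} \to F_i$ this edge is replaced by a four-edge sub-path whose three interior vertices --- the two trisection points of $\overline{v_1 v_2}$ and the apex of the equilateral triangle erected on the middle third --- are new, hence each of level exactly $i$. So in $F_i$ there are exactly three vertices of level $i$ strictly between $v_1$ and $v_2$, and by fact (b) every other level-$i$ vertex comes from a different (internally disjoint) edge of $F_{i-1}$ and therefore, by fact (a), lies outside the interval $(v_1, v_2)$. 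Passing from $F_i$ to $F_n$ does not change this count: by fact (a) the relative order of $v_1$, $v_2$ and the three new vertices is preserved in every later step, and by fact (b) every vertex inserted in a later step has level strictly greater than $i$. Hence $F_n$ contains exactly three vertices of level $i$ between $v_1$ and $v_2$, as claimed.

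This is really just bookkeeping about the fractal's construction, so I do not expect a serious obstacle. The one place that deserves a careful sentence is the order-preservation claim in fact (a): one must check that when $\overline{v_1 v_2}$ is subdivided, the detour along the equilateral triangle keeps the three new vertices between $v_1$ and $v_2$ in the natural order, and that a subdivision happening at any other edge never injects a level-$i$ vertex into the interval $(v_1, v_2)$ --- both being consequences of the fact that the modification applied to an edge is confined to that edge and leaves the endpoints of every processed edge in place.
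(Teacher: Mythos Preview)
Your argument is correct. The paper states this as an observation without proof, and your write-up supplies exactly the straightforward justification from the recursive construction that the authors evidently had in mind.
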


\begin{observation}
    \label{obs:path_in_rectangle}
In the graph $F_n$ and for $1 \le i \le n$, let $v_1,v_2,v_3$ be three vertices that are consecutive in the graph $F_i$ and let $R$ be their bounding rectangle (with respect to $F_i$). Then $P(v_1,v_3) \subseteq R$, that is, the path between $v_1$ and $v_3$ in $F_n$ is contained in $R$.
\end{observation}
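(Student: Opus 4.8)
The plan is to reduce the claim to a single scale- and rotation-invariant fact about the Koch curve, after which a short convexity argument settles both cases at once. For a segment $s$ with a chosen side of it, let $T(s)$ be the isosceles triangle with base $s$, base angles $30^{\circ}$, and third vertex on the chosen side — equivalently, the third vertex is the point at distance $\frac{\sqrt3}{6}|s|$ from the midpoint of $s$ on that side. For an edge $s$ of $F_i$ I always take the chosen side to be the one on which the Koch construction erects the bump over $s$, that is, the left side relative to the direction in which $s$ is traversed by the path; since this chirality is fixed throughout the construction, the part of $F_n$ obtained by refining $s$ is an orientation-preserving scaled copy of $F_{n-i}$ placed over $s$.

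The heart of the argument is the following claim: \emph{for every edge $s$ of $F_i$ and every $n\ge i$, the part of $F_n$ that refines $s$ is contained in $T(s)$}. I would prove it by induction on $n-i$, the case $n=i$ being immediate. For the inductive step, one refinement of $s$ replaces it by four edges $e_1,e_2,e_3,e_4$ of $F_{i+1}$ — the two outer thirds of $s$ and the two legs of the equilateral triangle erected on its middle third (see Observation~\ref{obs:3_i_between_i-}) — and the part of $F_n$ refining $s$ is the union of the parts refining $e_1,\dots,e_4$, each contained in $T(e_j)$ by the inductive hypothesis. Hence it suffices to show $T(e_1)\cup T(e_2)\cup T(e_3)\cup T(e_4)\subseteq T(s)$, and by a similarity we may check this for $s$ the unit segment on the $x$-axis with chosen side up. Then $T(e_1)$ and $T(e_4)$ are small triangles over the outer thirds of $s$ and plainly lie in $T(s)$, while a one-line computation shows that the third vertices of $T(e_2)$ and $T(e_3)$ land \emph{exactly} on the two slanted sides of $T(s)$ and their remaining vertices are the apex of $T(s)$ and a point of its base; since $T(s)$ is convex, $T(e_2),T(e_3)\subseteq T(s)$ as well.

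Granting the claim, the observation follows quickly. Because $v_1,v_2,v_3$ are consecutive in $F_i$, the segments $\overline{v_1v_2}$ and $\overline{v_2v_3}$ are edges of $F_i$, and $P(v_1,v_3)$ in $F_n$ is precisely the union of the parts of $F_n$ refining these two edges; so by the claim $P(v_1,v_3)\subseteq T(\overline{v_1v_2})\cup T(\overline{v_2v_3})$, and since $R$ is convex it suffices to check that the vertices of these two triangles lie in $R$. If the angle at $v_2$ is $240^{\circ}$, then $\overline{v_1v_2}$ and $\overline{v_2v_3}$ have equal length and their chosen (bump) sides both point into the $120^{\circ}$ side of the angle at $v_2$; choosing coordinates in which $R$ is axis-parallel — possible since $R$ is parallel to one of $\overline{v_1v_2},\overline{v_2v_3}$ — one computes that $v_1,v_2,v_3$ lie on $\partial R$ (as they must, being the diagonal endpoints and a long-edge point of $R$) and that both triangle apexes lie in $R$. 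If the angle at $v_2$ is $60^{\circ}$, then $\overline{v_1v_2}$ and $\overline{v_2v_3}$ are the two legs of the equilateral triangle on base $\overline{v_1v_3}$ with apex $v_2$, $R$ is the $|\overline{v_1v_3}|\times\frac{\sqrt3}{2}|\overline{v_1v_3}|$ rectangle having $\overline{v_1v_3}$ as one edge and $v_2$ on the opposite one, and — taking $\overline{v_1v_3}$ horizontal — the two triangle apexes land on the short vertical edges of $R$ and all six vertices are in $R$.

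The crux is the claim, and the only point demanding care is that the containment $T(e_2),T(e_3)\subseteq T(s)$ is tight — the third vertices of the two middle triangles sit on $\partial T(s)$ — so one may not replace $T(s)$ by any larger triangle; this tightness is exactly what makes the final containments into $R$ possible, $R$ being tailored to these triangles. The rest is bookkeeping: fixing the chosen side of each edge, tracking which of $v_1,v_3$ has the smaller level so as to pin down the orientation of $R$ in the $240^{\circ}$ case, and a handful of coordinate checks.
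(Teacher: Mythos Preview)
Your argument is correct. The paper states this as an observation and offers no proof at all, so there is nothing to compare against; you have supplied a full justification where the paper gives none.

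The route via the $30^\circ$--$30^\circ$--$120^\circ$ triangles $T(s)$ is the natural one: the inductive containment $\bigcup_j T(e_j)\subseteq T(s)$ is the standard self-similar bound for the Koch curve (indeed all four apexes of $T(e_1),\dots,T(e_4)$ land on $\partial T(s)$, so $T(s)$ is tight), and once you have $P(v_1,v_3)\subseteq T(\overline{v_1v_2})\cup T(\overline{v_2v_3})$ the rest is a finite check of six points against a rectangle. The only place needing care --- which you flag --- is the $240^\circ$ case, where $R$ depends on which of $l(v_1),l(v_3)$ is smaller; both sub-cases check out, and in fact the reflection across the perpendicular bisector of $\overline{v_1v_3}$ fixes $v_2$, swaps the two rectangles, and swaps $T(\overline{v_1v_2})\leftrightarrow T(\overline{v_2v_3})$, so one verification suffices.
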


\begin{observation}
    \label{obs:max_distance_in_rectangle}
Consider the rectangle $R$ from Observation~\ref{obs:path_in_rectangle}. Then, the length of its diagonal is at most
$$
\sqrt{\left(\frac{3}{2}\left(\frac{1}{3}\right)^i\right)^2+\left(\frac{\sqrt{3}}{2}\left(\frac{1}{3}\right)^i\right)^2}=\frac{\sqrt{3}}{3^i} \, ,
$$
which is the length of its diagonal assuming it is the longer of the two possible rectangles, see Figure~\ref{fig:bounding_rect_def} (top).
\end{observation}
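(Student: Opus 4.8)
The plan is to identify the two side lengths of $R$ in each of the two cases of the bounding-rectangle definition, compute the diagonal in each, and take the larger. Two elementary facts about $F_i$ make this concrete. First, every edge of $F_i$ has length $\left(\frac13\right)^i$: this is immediate by induction on $i$, since each construction step replaces every edge by four edges, each of one third its length. Second, at every interior vertex of $F_i$ the two incident edges make a (non-reflex) angle of either $60^\circ$ or $120^\circ$ --- equivalently, $v_1,v_2,v_3$ always fall into one of the two cases of the bounding-rectangle definition, the ``$60^\circ$'' case or the ``$240^\circ$'' case (in the latter the edges $\overline{v_1v_2},\overline{v_2v_3}$ subtend $120^\circ$). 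This second fact is a routine induction on $i$: within one Koch ``unit'' $x,x_1,z,x_2,y$ the angles at $x_1,z,x_2$ are $120^\circ,60^\circ,120^\circ$, while the angle at a vertex shared by two units is inherited unchanged from the previous level (the base case being $F_1$). It is the only part of the argument that is not a direct calculation, and it is what guarantees that $R$ is well defined.

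In the $60^\circ$ case, $\overline{v_1v_2}$ and $\overline{v_2v_3}$ both have length $\left(\frac13\right)^i$ and meet at $60^\circ$, so the triangle $v_1v_2v_3$ is equilateral of side $\left(\frac13\right)^i$. By definition, one side of $R$ is $\overline{v_1v_3}$, of length $\left(\frac13\right)^i$, and $v_2$ lies on the opposite side, so the perpendicular sides of $R$ have length equal to the height of this triangle, $\frac{\sqrt3}{2}\left(\frac13\right)^i$. Hence $R$ has diagonal $\sqrt{\left(\frac13\right)^{2i}+\frac34\left(\frac13\right)^{2i}}=\frac{\sqrt7}{2}\cdot\frac{1}{3^i}$.

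In the $240^\circ$ case, $\overline{v_1v_2}$ and $\overline{v_2v_3}$ again have length $\left(\frac13\right)^i$, they subtend $120^\circ$, and $\overline{v_1v_3}$ is a diagonal of $R$. Placing the edge that the definition requires a long side of $R$ to contain (either $\overline{v_1v_2}$ or $\overline{v_2v_3}$) along the $x$-axis with $v_2$ at the origin, a short computation shows that $v_1$ and $v_3$ differ by $\frac32\left(\frac13\right)^i$ in their $x$-coordinates and by $\frac{\sqrt3}{2}\left(\frac13\right)^i$ in their $y$-coordinates; since $R$ is axis-parallel with $v_1$ and $v_3$ as opposite corners, these are exactly its two side lengths, the long one being $\frac32\left(\frac13\right)^i$ (as $\frac32>\frac{\sqrt3}{2}$, and this side does contain the required edge). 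Therefore $R$ has diagonal $|\overline{v_1v_3}|=\sqrt{\left(\frac32\left(\frac13\right)^i\right)^2+\left(\frac{\sqrt3}{2}\left(\frac13\right)^i\right)^2}=\frac{\sqrt3}{3^i}$, the displayed expression; equivalently, the law of cosines applied to $v_1v_2v_3$ with its $120^\circ$ angle gives $|\overline{v_1v_3}|^2=3\left(\frac13\right)^{2i}$ at once. Finally, $\frac{\sqrt3}{3^i}=\frac{\sqrt{12}}{2\cdot 3^i}>\frac{\sqrt7}{2\cdot 3^i}$, so the $240^\circ$ case --- the ``longer of the two possible rectangles'' (Figure~\ref{fig:bounding_rect_def}, top) --- yields the larger diagonal, which is the bound asserted in the observation.

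There is no deep obstacle here: the whole argument is a two-case trigonometric computation, and in both cases the side lengths of $R$ drop straight out of placing the three vertices in coordinates. The one point that calls for a little care is the inductive claim that consecutive edges of $F_i$ never make a $180^\circ$ angle (always $60^\circ$ or $120^\circ$), since this is what makes the bounding rectangle --- and hence this observation --- meaningful.
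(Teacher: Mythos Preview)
Your proof is correct and follows the same approach implicit in the paper's statement: the paper presents this as an observation without a written proof, simply asserting that the displayed expression is the diagonal of the longer of the two possible bounding rectangles (the $240^\circ$ case, Figure~\ref{fig:bounding_rect_def} top), and you have carefully verified exactly this by computing the side lengths and diagonal in both the $60^\circ$ and $240^\circ$ cases and comparing them. Your additional remark that every interior angle of $F_i$ is $60^\circ$ or $120^\circ$ (so that the bounding-rectangle definition is exhaustive) is a nice sanity check the paper leaves implicit.
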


We now bound (from above) the Hausdorff distance (denoted $d_H$) between a path $P(u,v)$ in $F_n$ and the line segment $\overline{uv}$ as a function of $l(u,v)$.

\begin{lemma}
    \label{lem:Hausdorff_upper_bound}
    Consider the graph $F_n=(V,E)$ and let $u,v\in V$ such that $l(u,v)=i$,
    then $d_H(P(u,v),\overline{uv}) \le \frac{\sqrt{3}}{3^{i-1}}$.
\end{lemma}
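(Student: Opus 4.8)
The plan is to reduce everything to a purely geometric containment: I will show that $P(u,v)$ lies inside the bounding rectangle $R$ of three consecutive vertices of $F_{i-1}$, and then read the Hausdorff bound off $\operatorname{diam}(R)$. Once $P(u,v)\subseteq R$ is known, then since $u\in P(u,v)\subseteq R$ we have, for every point $p$ on $P(u,v)$, $d(p,\overline{uv})\le d(p,u)\le\operatorname{diam}(R)$; and by Observation~\ref{obs:max_distance_in_rectangle}, applied with its index set to $i-1$, $\operatorname{diam}(R)$ equals the diagonal of $R$, which is at most $\frac{\sqrt{3}}{3^{i-1}}$. Taking the maximum over $p$ then yields $d_H(P(u,v),\overline{uv})\le\frac{\sqrt{3}}{3^{i-1}}$.

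To produce such a rectangle, I first invoke the hypothesis $l(u,v)=i$: at most one vertex on $P(u,v)$ can have level $\le i-1$, since two of them would already force $l(u,v)\le i-1$. Assume $i\ge 2$, so $F_{i-1}$ has at least three vertices. If exactly one vertex $w$ on $P(u,v)$ has level $\le i-1$, let $w^{-},w^{+}$ be its two neighbours in $F_{i-1}$; these are vertices of $F_{i-1}$ other than $w$, hence have level $\le i-1$ and so do not lie on $P(u,v)$, which therefore is contained in the sub-path $P(w^{-},w^{+})$, and I take the triple $(v_1,v_2,v_3)=(w^{-},w,w^{+})$. If no vertex on $P(u,v)$ has level $\le i-1$, then $P(u,v)$, being a connected sub-path of $F_n$ that contains no vertex of $F_{i-1}$, lies in the refinement of a single edge $\overline{ab}$ of $F_{i-1}$ (say $a$ precedes $b$); then $P(u,v)\subseteq P(a,b)\subseteq P(a,b^{+})$, with $b^{+}$ the successor of $b$ in $F_{i-1}$, and I take $(v_1,v_2,v_3)=(a,b,b^{+})$. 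In both cases $v_1,v_2,v_3$ are three consecutive vertices of $F_{i-1}$ with $P(u,v)\subseteq P(v_1,v_3)$, so Observation~\ref{obs:path_in_rectangle}, applied with its index set to $i-1$, gives $P(v_1,v_3)\subseteq R$, where $R$ is their bounding rectangle.

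The step that will require the most care is the bookkeeping at the boundary of the recursion. For the bounding rectangle of a triple to be defined, its middle vertex must be a genuine Koch corner (turn angle $60^{\circ}$ or $240^{\circ}$): this is automatic because every non-extreme vertex of $F_{i-1}$ is such a corner, but it forces a handful of symmetric sub-cases, since when $w$, $a$, or $b$ is an extreme vertex of $F_{i-1}$ the one-sided neighbour in question may be absent; there one simply picks the triple from the side that does exist (shifting it by one vertex if needed), which can only shrink the region containing $P(u,v)$. The case $i\le 1$ must be handled separately because $F_{i-1}$ then has fewer than three vertices: in that case $P(u,v)\subseteq F_n$, and $F_n$ is contained — by its self-similar structure — in the isosceles triangle with base $\overline{(0,0)(1,0)}$ and apex $(\tfrac{1}{2},\tfrac{\sqrt{3}}{6})$, whose diameter is $1$; hence $d(p,\overline{uv})\le d(p,u)\le 1\le\frac{\sqrt{3}}{3^{i-1}}$, and the bound again holds.
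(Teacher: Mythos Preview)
Your proposal is correct and follows essentially the same approach as the paper: locate three consecutive vertices of $F_{i-1}$ whose bounding rectangle contains $P(u,v)$ (via the same two-case analysis on whether $P(u,v)$ contains a vertex of level $\le i-1$), then invoke Observations~\ref{obs:path_in_rectangle} and~\ref{obs:max_distance_in_rectangle} at index $i-1$ to bound the diameter. The paper's only cosmetic difference is that it bounds $d_H$ by noting that \emph{both} $P(u,v)$ and $\overline{uv}$ lie in $R$, whereas you use $d(p,\overline{uv})\le d(p,u)$; and you are more explicit about the extreme-vertex boundary cases and the $i\le 1$ base case, which the paper dismisses as ``immediate.''
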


\begin{proof}
We prove the lemma for $i \ge 2$; for $i<2$ it is immediate.
We first identify three vertices $v_1,v_2,v_3$ that are consecutive in the graph $F_{i-1}$, such that their bounding rectangle (with respect to $F_{i-1}$) contains $P(u,v)$.

Since $l(u,v)=i$, there is at most one vertex in $P(u,v)$ of level $i^-$. If there is such a vertex, we set $v_2$ to be this vertex, set $v_1$ to be the the first vertex of level $i^-$ when moving from $u$ leftwards (i.e., towards the vertex at $(0,0)$), and set $v_3$ to be the first vertex of level $i^-$ when moving from $v$ rightwards (i.e., towards the vertex at $(1,0)$).

If all vertices in $P(u,v)$ are of level at least $i$, then we set $v_1$ to be the first vertex of level $i^-$ when moving from $u$ leftwards, and set $v_2$ and $v_3$ to be the first and second vertices, respectively, when moving from $v$ rightwards.

Clearly $v_1,v_2,v_3$ are consecutive vertices in $F_{i-1}$, and let $R$ be their bounding rectangle (with respect to $F_{i-1}$). By Observation~\ref{obs:path_in_rectangle}, the path $P(v_1,v_3)$ in $F_n$ is contained in $R$, and therefore so is the path $P(u,v)$ (since $P(u,v)$ is contained in $P(v_1,v_3)$). Clearly, the segment $\overline{uv}$ is also contained in $R$, since $R$ is convex.

Finally, since both $P(u,v)$ and $\overline{uv}$ are contained in $R$, the maximum distance between a point on $P(u,v)$ and a point on $\overline{uv}$ is at most the length of $R$'s diagonal, which according to Observation~\ref{obs:max_distance_in_rectangle} is at most $\frac{\sqrt{3}}{3^{i-1}}$. We thus conclude that $d_H(P(u,v),\overline{uv}) \le \frac{\sqrt{3}}{3^{i-1}}$.
\end{proof}

Next, we bound $d(u,v)$ for vertices $u$ and $v$ of $F_n$ (from below) as a function of $l(u,v)$.

\begin{lemma}
    \label{lem:distance_lower_bound}
    Consider the graph $F_n=(V,E)$ and let $u,v\in V$ such that $l(u,v)=i$,
    then $d(u,v) \geq \frac{1}{3^i} \cdot \frac{\sqrt{3}}{2}$.
\end{lemma}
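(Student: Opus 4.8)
The plan is to translate the condition $l(u,v)=i$ into information about where $u$ and $v$ sit on the Koch curve, and then reduce the bound to an estimate on how close two ``level‑$i$ pieces'' of the curve can come. By definition, $l(u,v)=i$ means that $P(u,v)$ has at least two vertices of level $\le i$ and at most one vertex of level $\le i-1$. First I would use this to place $u$ and $v$. If $u$ and $v$ are both vertices of $F_i$, then $d(u,v)$ is at least the edge length $\tfrac1{3^i}$ (a one‑line check using that consecutive edges of $F_i$ meet at $60^\circ$ or $120^\circ$), so the bound is immediate. Otherwise $u$ lies on the sub‑fractal of $F_n$ carried by some edge $e_u$ of $F_i$ and $v$ on the sub‑fractal carried by some edge $e_v$, with $e_u\ne e_v$; moreover, if $u$ and $v$ are both interior to their edges then $e_u$ and $e_v$ are not adjacent, i.e.\ at least one further edge of $F_i$ separates them along the curve. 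This last point is a short case analysis on whether the at‑most‑one vertex of level $\le i-1$ on $P(u,v)$ is absent, interior to $P(u,v)$, or one of $u,v$: in each case one reads off $e_u,e_v$, and Observation~\ref{obs:3_i_between_i-} (three vertices of level $i$ per edge of $F_{i-1}$) shows that at least two vertices of $F_i$ lie strictly between $u$ and $v$, so $e_u,e_v$ cannot coincide or be adjacent. (When one of $u,v$ is a vertex of $F_i$ the two edges can be adjacent; those configurations are absorbed into the next step.)

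It then remains to prove the geometric claim: if $S_1,S_2$ are the sub‑fractals of $F_n$ carried by two edges of $F_i$ that are non‑adjacent along the curve (or adjacent, in the boundary configurations above), then $d(S_1,S_2)\ge\tfrac{\sqrt3}{2}\cdot\tfrac1{3^i}$. Note that $\tfrac{\sqrt3}{2}\cdot\tfrac1{3^i}$ is exactly the height of a level‑$i$ ``bump'' triangle, which is what makes it the natural threshold; together with Lemma~\ref{lem:Hausdorff_upper_bound} it also yields the clean constant $6$. I would prove the claim from three facts: every edge of $F_i$ has length $\tfrac1{3^i}$; any two edges of $F_i$ sharing a vertex meet at interior angle $60^\circ$ or $120^\circ$ (the angle is $60^\circ$ if the vertex was created as a bump apex and $120^\circ$ otherwise, and it does not change at later levels); and a level‑$i$ sub‑fractal is contained in the rectangle whose long side is its edge and whose short side, of length $\tfrac{\sqrt3}{6}\cdot\tfrac1{3^i}$, lies on the side the bumps face (a standard property of the Koch curve — it does not overshoot its endpoints and rises at most $\tfrac{\sqrt3}{6}$ times its span; compare Observations~\ref{obs:path_in_rectangle}--\ref{obs:max_distance_in_rectangle}). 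Running through the finitely many local pictures around the one or two edges separating $e_u$ from $e_v$ — each picture fixed by the $60^\circ/120^\circ$ meeting angle(s) and the side(s) on which the adjacent bumps lie — one checks the numeric bound by locating, in each picture, the pair of points of $S_1$ and $S_2$ realizing the closest approach (these turn out to be certain endpoints of sub‑edges, not midpoints).

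The hard part will be that geometric claim, and in particular the fact that the rectangles alone are not enough: in the worst pictures two such rectangles come within $\tfrac12\cdot\tfrac1{3^i}$ of one another, short of the required $\tfrac{\sqrt3}{2}\cdot\tfrac1{3^i}$. What recovers the slack is that a level‑$i$ sub‑fractal is ``pinched'' near the endpoints of its edge: its deviation from the edge is attained only near the midpoints of its sub‑edges and decays geometrically with the sub‑level, so near an endpoint the curve stays inside a thin cone rather than filling the rectangle. Folding this pinching into the case check — together with the degenerate situations where $u$ or $v$ has level $\le i-1$ or lies at an endpoint of $F_n$, and the adjacent‑edge configurations noted above — gives $d(u,v)\ge\tfrac{\sqrt3}{2}\cdot\tfrac1{3^i}$.
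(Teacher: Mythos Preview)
Your plan is structurally sound but takes a harder route than the paper, and the step you yourself flag as ``the hard part'' is not actually carried out.

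The paper's proof avoids any analysis of how close two level-$i$ sub-fractals can come. Instead, it works entirely at level $i$: after the same case split on whether $P(u,v)$ contains a vertex of level $i^-$ (and, if so, whether that vertex is a $60^\circ$- or a $240^\circ$-vertex), in each case it exhibits two explicit parallel lines through level-$i$ vertices such that $u$ lies on one side and $v$ on the other. The distance between the lines is either the height $\tfrac{\sqrt3}{2}\cdot\tfrac1{3^i}$ of a level-$i$ triangle or the level-$i$ edge length $\tfrac1{3^i}$, and that immediately gives the bound. No information about the shape of the sub-fractals at levels $>i$ is needed, so the rectangle and ``pinching'' machinery never enters.

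Your approach, by contrast, asks for the minimum distance between two sub-fractals $S_1,S_2$ carried by non-adjacent level-$i$ edges. You correctly observe that the bounding rectangles alone fall short (they can come within $\tfrac12\cdot\tfrac1{3^i}$), and propose to recover the gap via a cone-type bound on the sub-fractal near its endpoints. That idea is plausible, but you have not stated a precise pinching lemma, and making one that is uniform over all the local $60^\circ/120^\circ$ configurations and all sub-levels is real work---essentially a second induction on top of the case analysis. Until that lemma is written down and verified against each picture, the proof is incomplete. The parallel-line argument buys you exactly this: it replaces a statement about two fractal curves by a statement about which half-plane each curve lies in, which is determined already at level $i$.
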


\begin{proof}
Let $u'$ ($v'$) be the level-$i^-$ vertex of $F_n$ that precedes $u$ (succeeds $v$).
Since $l(u,v)=i$, there is at most one vertex in $P(u,v)$ of level $i^-$. We distinguish between three cases.

{\bf Case 1: There is no vertex of level $i^-$ in $P(u,v)$.}\\
By Observation~\ref{obs:3_i_between_i-}, there are exactly three vertices of level $i$ between $u'$ and $v'$. Moreover, at least two of them are in $P(u,v)$ (since $l(u,v) = i$ and there is no vertex of level $i^-$ in $P(u,v)$).
Assume, without loss of generality, that the middle and the right of these level-$i$ vertices are in $P(u,v)$.

We draw two parallel lines as depicted in Figure~\ref{fig:parallel_lines_case_1}. The first line passes through the left and middle level-$i$ vertices, and the second line passes through the right level-$i$ vertex and is parallel to the first line. Next, we observe that $u$ lies on one side of these lines and $v$ lies on the other side, and therefore the distance between $u$ and $v$ is at least the height of the level-$i$ triangle, formed by the three level-$i$ vertices. It is easy to verify that this height is $\frac{1}{3^i} \cdot \frac{\sqrt{3}}{2}$. We conclude that $d(u,v) \geq \frac{1}{3^i} \cdot \frac{\sqrt{3}}{2}$. 

{\bf Case 2: There is a $60^\circ$-vertex $w$ of level $i^-$ in $P(u,v)$.}\\
Since $l(u,v)=i$, there is at least one vertex of level $i$ in $P(u,v)$, and without loss of generality, we assume there is such a vertex that succeeds $w$.

We draw two parallel lines as depicted in Figure~\ref{fig:parallel_lines_case_2}. The first line passes through $w$ and the first level-$i$ vertex that precedes $w$. The second line is parallel to the first one and passes through the first level-$i$ vertex that succeeds $w$; by our assumption this vertex is in $P(u,v)$. As in the previous case, we observe that $u$ lies on one side of these lines and $v$ lies on the other side, and therefore the distance between $u$ and $v$ is at least the distance between the lines, which is the height of the level-$i$ triangle. We conclude that $d(u,v) \geq \frac{1}{3^i} \cdot \frac{\sqrt{3}}{2}$.

{\bf Case 3: There is a $240^\circ$-vertex $w$ of level $i^-$ in  $P(u,v)$.}
As in the previous case, there is at least one vertex of level $i$ in $P(u,v)$, and without loss of generality, we assume there is such a vertex that succeeds $w$.

We draw two parallel lines as depicted in Figure~\ref{fig:parallel_lines_case_3}. The first line passes through $w$ and the middle of the three level-$i$ vertices that lie between $v'$ and $w$. The second line is parallel to the first one and passes through the level-$i$ vertex that immediately succeeds $w$; by our assumption this vertex is in $P(u,v)$. Again, we observe that $u$ lies on one side of these lines and $v$ lies on the other side, and therefore the distance between $u$ and $v$ is at least the distance between the lines, which is the distance between two adjacent vertices in $F_i$. We conclude that $d(u,v) \ge \frac{1}{3^i} \ge \frac{1}{3^i} \cdot \frac{\sqrt{3}}{2}$.
\end{proof}

\begin{figure}[!htb]
    
\begin{subfigure}{\textwidth}
    \centering
    \includegraphics[width=0.45\textwidth]{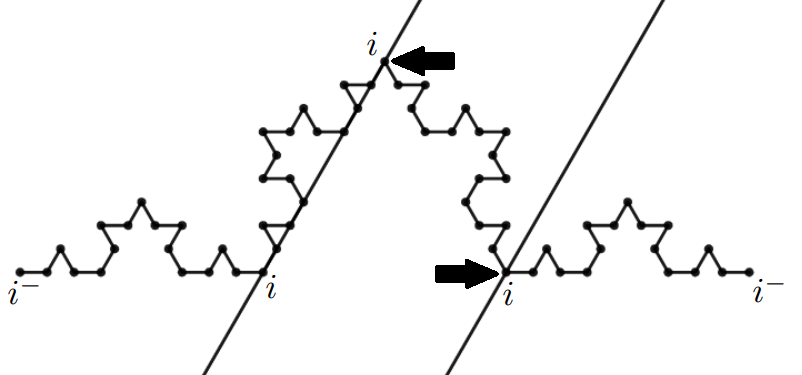}
    \caption{Case 1: There is no vertex of level $i^-$ in $P(u,v)$.}
    \label{fig:parallel_lines_case_1}
\end{subfigure}

\begin{subfigure}{\textwidth}
    \centering
    \includegraphics[width=0.55\textwidth]{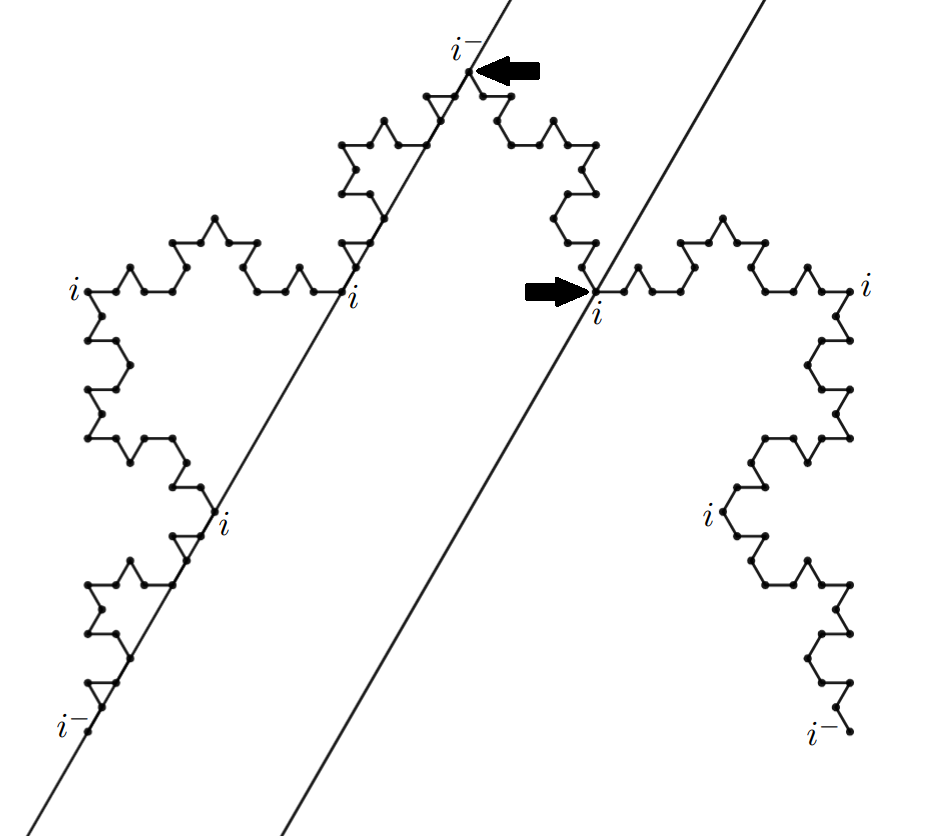}
    \caption{Case 2: There is a $60^\circ$-vertex of level $i^-$ in $P(u,v)$.}
    \label{fig:parallel_lines_case_2}
\end{subfigure}

\begin{subfigure}{\textwidth}
    \centering
    \includegraphics[width=0.85\textwidth]{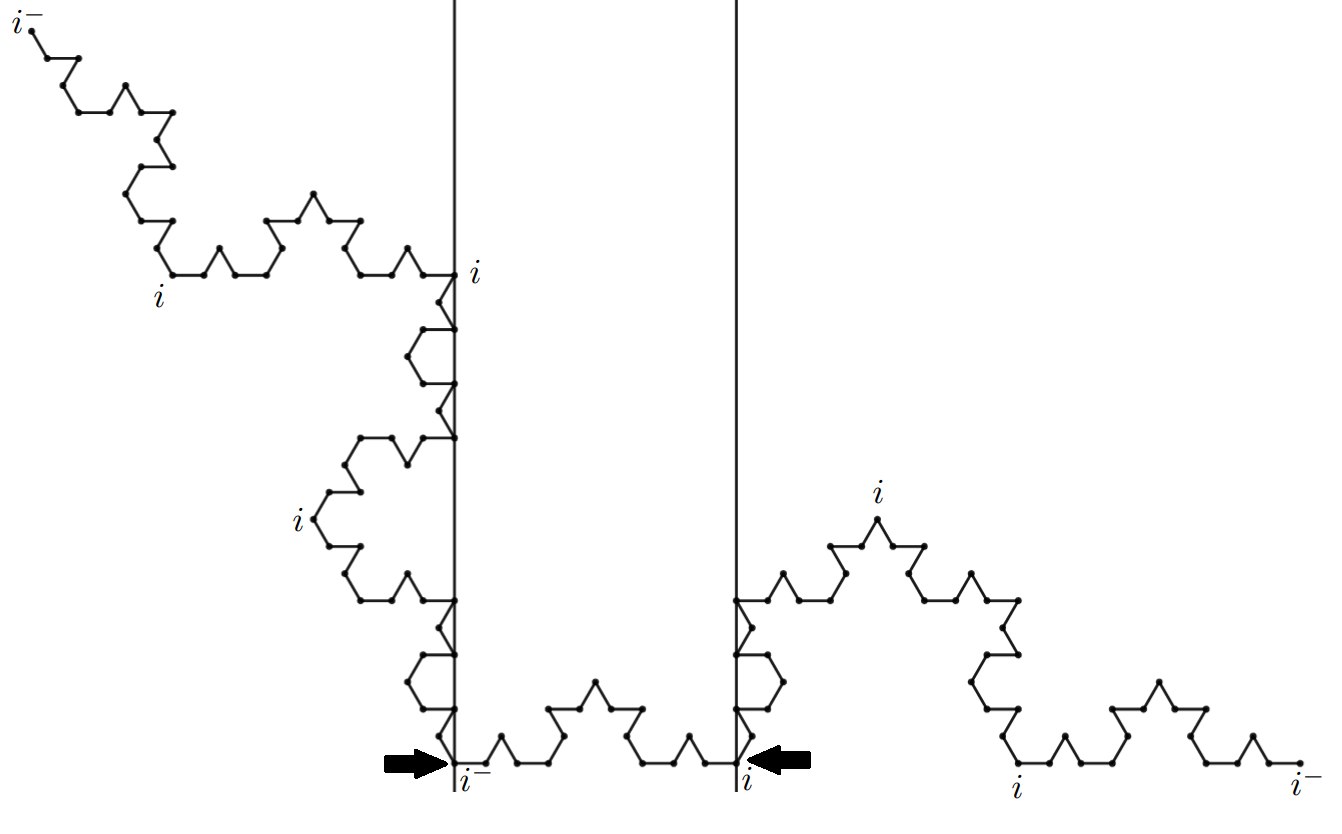}
    \caption{Case 3: There is a $240^\circ$-vertex of level $i^-$ in $P(u,v)$.}
    \label{fig:parallel_lines_case_3}
\end{subfigure}

\caption{Proof of Lemma~\ref{lem:distance_lower_bound}. The arrows mark the vertices that are known to be in the path $P(u,v)$.}
\label{parallel_lines}
\end{figure}

We are now ready to prove Theorem~\ref{thm:Hausdorff_not_Euclid}.
We first show that for any $n \ge 0$, the graph $F_n$ is a 6-Hausdorff spanner. Let $u,v$ be two vertices of $F_n$ and set $i=l(u,v)$. Then,
on the one hand by Lemma~\ref{lem:Hausdorff_upper_bound}, $d_H(P(u,v),\overline{uv}) \le \frac{\sqrt{3}}{3^{i-1}}$, and on the other hand by Lemma~\ref{lem:distance_lower_bound}, $d(u,v) \ge \frac{1}{3^i} \cdot \frac{\sqrt{3}}{2}$. Therefore, 
$$
\frac{d_H(P(u,v),\overline{uv})} {d(u,v)} \le \frac{\frac{\sqrt{3}}{3^{i-1}}}{\frac{1}{3^i}\cdot \frac{\sqrt{3}}{2}}=6 \, .
$$

Next, let $t \ge 1$. Then, there exists an integer $n > 1$, such that the length of the path $P$ between the extreme vertices of $F_n$ (i.e., the vertices at $(0,0)$ and $(1,0)$) is greater than $t$, and therefore $F_n$ is not a $t$-spanner (since the length of $P$ over the length of the segment between the extreme vertices of $F_n$ is  simply the length of $P$).

\section{Fr{\'e}chet spanners}
\label{sec:Frechet}
\subsection{$t$-spanners are $\eps$-Fr{\'e}chet-spanners}

In this section we show that a $t$-spanner is an $\eps$-Fr{\'e}chet-spanner, for $\eps = f(t)$.

Let $H$ be a $t$-spanner, $t > 1$, and let $u,v \in S$. We assume, with loss of generality, that $d(u,v)=1$ and that $u=(0,0)$ and $v=(1,0)$. Let $P(u,v)$ be a $t$-path in $H$ between $u$ and $v$. Then, $P(u,v)$'s length is at most $t$.

We first prove that $P(u,v)$ is an $\eps$-Fr{\'e}chet-path, for $\eps=\frac{t}{2}$. This bound is useful when $t$ is `large', but, since it is never smaller than $1/2$, it is less useful when $t$ approaches 1, in which case we would like to show (if possible) that $\eps$ approaches 0. To address this issue, we prove a second bound on $\eps$, which is better when $1 < t < 2$. Specially, we prove that $P(u,v)$ is also an $\eps$-Fr{\'e}chet-path, for $\eps=\frac{\sqrt{t^2-t}}{\sqrt{2}}$.

\subsubsection{Bound 1 --- The bound for large $t$}

\begin{lemma}
\label{lem:bound_for_large_t}
    $P(u,v)$ is a $\frac{t}{2}$-Fr{\'e}chet-path.
\end{lemma}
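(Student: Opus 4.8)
The plan is to exhibit an explicit monotone matching (a "leash schedule") between $P(u,v)$ and $\overline{uv}$ whose maximal length is at most $t/2$. Recall that $d(u,v)=1$, $u=(0,0)$, $v=(1,0)$, and that $P(u,v)$ has total length at most $t$. The natural idea is: parametrize $P(u,v)$ by arc length, $\gamma:[0,L]\to\reals^2$ with $L\le t$, and let the owner move along $\overline{uv}$ so that, at the moment the dog has walked arc-length $\ell$ along the path, the owner is at the point $(\min\{\ell, 1 - (L-\ell)\},\,0)$ — or, more simply, have the owner interpolate linearly in arc-length, i.e.\ be at $(\ell/L,\,0)$ when the dog is at $\gamma(\ell)$. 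Both maps are non-decreasing in $\ell$, so no backtracking occurs, and both start at $u$ and end at $v$.

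First I would bound the leash length at an arbitrary time $\ell$. The dog is at $\gamma(\ell)$, which has travelled arc-length $\ell$ from $u$ and arc-length $L-\ell$ from $v$; hence $d(\gamma(\ell),u)\le \ell$ and $d(\gamma(\ell),v)\le L-\ell$. The owner's point $w(\ell)$ lies on $\overline{uv}$, so $d(\gamma(\ell), w(\ell)) \le \min\{\, d(\gamma(\ell),u) + d(u, w(\ell)),\; d(\gamma(\ell),v) + d(v, w(\ell))\,\}$. Choosing the owner's position so that $d(u,w(\ell))\le \ell$ and $d(v,w(\ell))\le L-\ell$ simultaneously (possible precisely because $\ell + (L-\ell) = L \le t$ and we only need the owner to stay on a unit segment, using that $L\ge 1$), the triangle inequality gives, say via the first branch, $d(\gamma(\ell),w(\ell)) \le 2\ell$ when $\ell$ is small and via the second branch $\le 2(L-\ell)$ when $\ell$ is large; the two bounds cross at $\ell = L/2$, yielding $d(\gamma(\ell),w(\ell)) \le L/2 \le t/2$. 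I would present this crossing argument cleanly: the leash is at most $\min\{2d(\gamma(\ell),u),\,2d(\gamma(\ell),v)\}$ after adding the (small) owner-displacement terms, and $\min\{2a,2b\}\le a+b\le L\le t$ is too weak, so one really wants $\min\{2a,2b\}$ with $a+b=$ distance travelled $\le t$, giving the sharper $t/2$.

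The step I expect to be the main obstacle is making the owner's schedule genuinely monotone while keeping both displacement terms $d(u,w(\ell))$ and $d(v,w(\ell))$ under control at the single switching time $\ell=L/2$ — i.e.\ verifying that the owner can be at a point of $\overline{uv}$ that is within $\ell$ of $u$ and within $L-\ell$ of $v$ for every $\ell$, which forces $w(\ell)$ into the (nonempty, since $L\ge 1$) interval $[\,\max\{0,1-(L-\ell)\},\ \min\{1,\ell\}\,]$, and checking this interval's endpoints are monotone in $\ell$ so the owner never backtracks. Once that bookkeeping is in place, the bound $t/2$ falls out. I would also remark that this bound is tight-ish and never beats $1/2$, motivating the second bound proved next.
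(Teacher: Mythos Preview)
Your crossing argument contains an arithmetic slip that is fatal to the bound. With $a=d(\gamma(\ell),u)\le\ell$ and $b=d(\gamma(\ell),v)\le L-\ell$, routing the leash through an endpoint and using the triangle inequality gives $d(\gamma(\ell),w(\ell))\le\min\{2\ell,\,2(L-\ell)\}$; these two bounds cross at $\ell=L/2$ where they both equal $L$, \emph{not} $L/2$. Your subsequent assertion that ``$\min\{2a,2b\}$ with $a+b\le t$ gives the sharper $t/2$'' is false: the supremum of $\min\{2a,2b\}$ subject to $a+b\le t$ is $t$, attained at $a=b=t/2$. So the triangle-inequality-through-an-endpoint scheme, by itself, proves only that $P(u,v)$ is a $t$-Fr\'echet-path, which is vacuous.

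The missing geometric ingredient is that the distance from a fixed point to a point on a segment is maximized at an endpoint. The paper exploits this with a three-stage schedule: let $p$ be the arc-length midpoint of $P(u,v)$, so $d(p,u)\le L/2$ and $d(p,v)\le L/2$; the owner waits at $u$ while the dog walks to $p$ (leash $\le L/2$), then the owner walks from $u$ to $v$ while the dog waits at $p$ (leash $\le\max\{d(p,u),d(p,v)\}\le L/2$ by the endpoint-maximum fact), and finally the dog walks from $p$ to $v$ while the owner waits at $v$. Alternatively, your $\ell/L$ schedule can be salvaged, but not via the detour through $u$ or $v$: write $w(\ell)=\tfrac{L-\ell}{L}u+\tfrac{\ell}{L}v$ and apply convexity of the norm directly to get
\[
|\gamma(\ell)-w(\ell)|=\Bigl|\tfrac{L-\ell}{L}(\gamma(\ell)-u)+\tfrac{\ell}{L}(\gamma(\ell)-v)\Bigr|
\le \tfrac{L-\ell}{L}\cdot\ell+\tfrac{\ell}{L}\cdot(L-\ell)=\tfrac{2\ell(L-\ell)}{L}\le \tfrac{L}{2}\le\tfrac{t}{2}.
\]
Either route closes the gap; your proposal as written does not.
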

\begin{proof}
Let $p$ be the middle point of $P(u,v)$, that is, the distance from $u$ to $p$ (through $P(u,v)$) is equal to the distance from $p$ to $v$ (through $P(u,v)$).
Consider a dog and its owner, both walking from $u$ to $v$, where the dog is walking along $P(u,v)$ and the owner is walking along $\overline{uv}$. Their hike consists of three stages. In the first, the owner is at $u$ and the dog advances to $p$; in the second, the owner advances from $u$ to $v$, while the dog stays at $p$; and in the third, the owner is at $v$ and the dog advances from $p$ to $v$.

We show that at any point along their hike, the distance between the dog and its owner does not exceed $t/2$. Indeed, let $p^-$ be a point on $P(u,v)$, anywhere between $u$ and $p$. Then, the distance from $u$ to $p^-$ (through $P(u,v)$) is at most $t/2$, and therefore $d(p^-,u) \le t/2$. Similarly, let $p^+$ be a point on $P(u,v)$, anywhere between $p$ and $v$. Then, the distance from $p^+$ to $v$ (through $P(u,v)$) is at most $t/2$, and therefore $d(p^+,v) \le t/2$. Finally, let $q$ be a point on $\overline{uv}$. We need to show that $d(p,q) \le t/2$. Indeed, if $q_x \le p_x$, then 
$d(p,q) \le d(p,u) \le t/2$,
and if $q_x \ge p_x$, then $d(p,q) \le d(p,v) \le t/2$.
\end{proof}

\begin{corollary}
\label{cor:bound1}
Any $t$-spanner is an $\eps$-Fr{\'e}chet-spanner, for $\eps=\frac{t}{2}$.
\end{corollary}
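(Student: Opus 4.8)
The plan is to derive Corollary~\ref{cor:bound1} directly from Lemma~\ref{lem:bound_for_large_t}, since the lemma already does all the real work; the only thing left is to quantify over all pairs and to remove the normalizing assumption $d(u,v)=1$. First I would fix an arbitrary pair of points $u,v \in S$ and invoke the definition of a $t$-spanner to obtain a $t$-path $P(u,v)$ in $H$ between $u$ and $v$. Then I would apply Lemma~\ref{lem:bound_for_large_t} to this path to conclude that it is a $\frac{t}{2}$-Fr{\'e}chet-path, i.e.\ $d_F(P(u,v),\overline{uv}) \le \frac{t}{2}\cdot d(u,v)$.

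The one point that requires a sentence of justification is that Lemma~\ref{lem:bound_for_large_t} was stated under the assumption $d(u,v)=1$ (with $u=(0,0)$, $v=(1,0)$). I would note that both the Fr{\'e}chet distance $d_F(P(u,v),\overline{uv})$ and the Euclidean distance $d(u,v)$ scale linearly under similarity transformations of the plane, so after translating, rotating, and uniformly rescaling the configuration so that $u=(0,0)$ and $v=(1,0)$, the inequality $d_F(P(u,v),\overline{uv}) \le \frac{t}{2}\,d(u,v)$ is exactly the conclusion of the lemma, and it transfers back unchanged. (Alternatively, one simply re-reads the hiking argument in the proof of Lemma~\ref{lem:bound_for_large_t} with every occurrence of $t/2$ replaced by $\frac{t}{2}\,d(u,v)$: the midpoint $p$ of $P(u,v)$ is within path-distance $\frac{t}{2}d(u,v)$ of each endpoint, hence within Euclidean distance $\frac{t}{2}d(u,v)$, and the projection bound $d(p,q)\le\max\{d(p,u),d(p,v)\}$ is scale-free.)

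Since the pair $u,v$ was arbitrary, every pair of points of $S$ is joined in $H$ by a $\frac{t}{2}$-Fr{\'e}chet-path, which is precisely the definition of $H$ being an $\eps$-Fr{\'e}chet-spanner for $\eps=\frac{t}{2}$. I do not expect any real obstacle here: the statement is a routine packaging of the lemma, and the only subtlety—the normalization—is dispatched by the linear scaling behavior of both distances involved.
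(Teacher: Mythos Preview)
Your proposal is correct and matches the paper's approach: the corollary is stated immediately after Lemma~\ref{lem:bound_for_large_t} with no separate proof, so the paper treats it as an immediate consequence of the lemma together with the WLOG normalization made at the start of the section. Your explicit remark about scaling under similarity transformations is just a spelled-out version of that normalization, and nothing more is needed.
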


\subsubsection{Bound 2 --- The bound for small $t$\\}

\hspace{-2.75mm} Recalling Observation~\ref{obs:t-path_in_ellipse}, we observe that
\begin{observation}\label{obs:t-path_xy}
    Let $p$ be any point on $P(u,v)$, then
(i) $-\frac{t-1}{2} \le p_x \le \frac{t-1}{2}$ and 
(ii) $-\frac{\sqrt{t^2-1}}{2} \le p_y \le \frac{\sqrt{t^2-1}}{2}$.
\end{observation}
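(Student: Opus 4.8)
The plan is to read both bounds directly off of Observation~\ref{obs:t-path_in_ellipse}. That observation already gives $P(u,v)\subset E$, so it suffices to bound the $x$- and $y$-coordinates of an \emph{arbitrary} point of the closed elliptical region $E$; equivalently, to determine the smallest axis-parallel rectangle containing $E$. Recall that $E=\{p : d(p,u)+d(p,v)\le t\}$, a confocal ellipse with foci $u=(0,0)$ and $v=(1,0)$, hence with center $(\tfrac12,0)$, major axis along the $x$-axis, focal half-distance $c=\tfrac{d(u,v)}{2}=\tfrac12$, and semi-major axis $a=\tfrac t2$; its semi-minor axis is $b=\sqrt{a^2-c^2}=\tfrac{\sqrt{t^2-1}}{2}$, which is exactly the quantity already computed in Observation~\ref{obs:t-path_in_ellipse}.

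For the vertical bound (ii), since the major axis of $E$ is the $x$-axis, the points of $E$ farthest from the $x$-axis are precisely the topmost and bottommost boundary points above/below the midpoint $(\tfrac12,0)$ — the points singled out in Observation~\ref{obs:t-path_in_ellipse} — and their distance to the $x$-axis is $b=\tfrac{\sqrt{t^2-1}}{2}$. Hence every $p\in E$, in particular every $p$ on $P(u,v)$, satisfies $|p_y|\le \tfrac{\sqrt{t^2-1}}{2}$. If one prefers to avoid quoting the shape of an ellipse, reflect $v$ across the horizontal line $y=p_y$ to get $v'=(1,2p_y)$; then the triangle inequality gives $t\ge d(p,u)+d(p,v)=d(p,u)+d(p,v')\ge d(u,v')=\sqrt{1+4p_y^2}$, which rearranges to $|p_y|\le\tfrac{\sqrt{t^2-1}}{2}$. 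For the horizontal bound (i), the $x$-extent of $E$ is the interval $[\tfrac12-\tfrac t2,\ \tfrac12+\tfrac t2]=[-\tfrac{t-1}{2},\ \tfrac{t+1}{2}]$: a point with, say, $p_x<-\tfrac{t-1}{2}$ would have $d(p,u)\ge -p_x>\tfrac{t-1}{2}$ and $d(p,v)\ge 1-p_x>\tfrac{t+1}{2}$, so $d(p,u)+d(p,v)>t$, contradicting $p\in E$; the right side is symmetric. Thus $P(u,v)$ stays within horizontal distance $\tfrac{t-1}{2}$ of each endpoint of $\overline{uv}$.

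I do not expect a genuine obstacle: the observation is an immediate corollary of Observation~\ref{obs:t-path_in_ellipse} together with elementary confocal-ellipse geometry. The only points requiring care are (a) identifying the major axis of $E$ with the $x$-axis, so that the vertical extremes coincide with the boundary points already analyzed, and thereby reusing the value $\tfrac{\sqrt{t^2-1}}{2}$ without recomputation; and (b) the bookkeeping of the shift by $\tfrac12$ when translating the centered interval $[-a,a]$ into the coordinate frame in which $u$ is at the origin.
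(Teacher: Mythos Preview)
Your approach is exactly the paper's: the observation is presented there without proof, merely prefaced by ``Recalling Observation~\ref{obs:t-path_in_ellipse}, we observe that\ldots'', and you supply the details of that deduction (bounding-box of the confocal ellipse).

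One remark is worth making. Your computation of the $x$-extent, $[-\tfrac{t-1}{2},\tfrac{t+1}{2}]$, is correct and in fact exposes a typo in the printed statement: with $u=(0,0)$ and $v=(1,0)$ the upper bound in (i) cannot be $\tfrac{t-1}{2}$ (the point $v$ itself violates it whenever $t<3$); it should read $p_x\le \tfrac{t+1}{2}$, or equivalently $1+\tfrac{t-1}{2}$. Your closing sentence (``within horizontal distance $\tfrac{t-1}{2}$ of each endpoint'') is the right way to phrase what is actually true. This does not affect anything downstream, since the paper only ever invokes the lower bound $p_x\ge -\tfrac{t-1}{2}$ in Case~3 of Lemma~\ref{lem:bound_for_small_t}.
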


\begin{lemma}
\label{lem:bound_for_small_t}
    P(u,v) is a $\frac{\sqrt{t^2-t}}{\sqrt{2}}$-Fr{\'e}chet-path.
\end{lemma}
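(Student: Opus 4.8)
The plan is to use the same dog-and-owner hike structure as in Lemma~\ref{lem:bound_for_large_t}, but to choose the ``pivot'' point on $P(u,v)$ (where the dog waits while the owner traverses $\overline{uv}$) more cleverly, and to bound the leash length using Observation~\ref{obs:t-path_xy} rather than the crude triangle-inequality bound $d(p^-,u)\le t/2$. Recall that in Bound~1 the leash length $t/2$ came entirely from the worst case where a subpath of $P(u,v)$ of length $t/2$ is stretched out into a straight segment; when $t$ is close to $1$ this is wasteful, because Observation~\ref{obs:t-path_xy} forces every point of $P(u,v)$ to lie in a thin box of half-width $\frac{t-1}{2}$ in $x$ and half-height $\frac{\sqrt{t^2-1}}{2}$ in $y$, so any two points of $P(u,v)$ are within distance roughly $\sqrt{(t-1)^2 + (t^2-1)}/1 = \sqrt{(t-1)(t-1) + (t-1)(t+1)} = \sqrt{2t(t-1)} = \sqrt{2}\,\sqrt{t^2-t}$ of each other along the $x$-extent; but the $y$-extent still contributes. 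I would instead split the analysis so that the leash is never longer than $\frac{\sqrt{t^2-t}}{\sqrt 2}$.

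Concretely, first I would set up the reparametrization: let $u = \gamma(0)$, $v = \gamma(1)$ along $\overline{uv}$, and pick a point $p$ on $P(u,v)$ that the dog reaches when the owner is partway along $\overline{uv}$. Rather than having the owner stand still while the dog crosses the whole path, I would interleave the two walks so that at all times the owner's $x$-coordinate and the dog's $x$-coordinate stay close — e.g.\ have the owner track, monotonically, the $x$-coordinate of the dog, which is legal because the dog's $x$-coordinate, while not monotone, ranges over an interval of total $x$-extent only $t-1$ (by Observation~\ref{obs:t-path_xy}(i), together with the fact that the path goes from $x=0$ to $x=1$, the $x$-coordinate stays in $[0,1]$ actually — so even simpler). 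Then at every instant the horizontal gap between dog and owner is at most $t-1$ and the vertical gap is at most $\frac{\sqrt{t^2-1}}{2}$, giving a leash of at most $\sqrt{(t-1)^2 + \frac{t^2-1}{4}}$. This is not quite the claimed bound, so the real content is to argue that the horizontal gap and the vertical gap cannot simultaneously be large: when the dog is near the top or bottom of the box (vertical gap near its max), the path has already ``spent'' length going up, so its remaining horizontal slack is reduced. I would make this precise by, at each instant, letting the owner sit at the point of $\overline{uv}$ closest to the dog's current position (projection onto the $x$-axis, clamped to $[0,1]$), so the leash equals exactly the dog's distance to the segment, namely $|p_y|$ at points with $0\le p_x\le 1$, which by Observation~\ref{obs:t-path_xy}(ii) is at most $\frac{\sqrt{t^2-1}}{2}$ — but this projection map may force the owner to backtrack, which is forbidden, so this naive choice fails and we must combine waiting phases with tracking phases.

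So the actual scheme I would carry out: divide $[0,1]$ (the owner's timeline along $\overline{uv}$) using the extreme points of $P(u,v)$ in $x$. Let $p$ be the point of $P(u,v)$ that is a ``turning point'' — more precisely, take the point $p$ on $P(u,v)$ such that the two subpaths from $u$ to $p$ and from $p$ to $v$ each have bounded $x$-extent; by a continuity/midpoint argument one can choose $p$ so that the portion of $P(u,v)$ before $p$ lies in $x\le p_x + \delta$ and the portion after lies in $x \ge p_x - \delta$ for a controlled $\delta$. Phase one: owner advances from $x=0$ to $x=p_x$ while the dog traverses $u\to p$; during this phase the dog's $x$-coordinate stays within $[0, p_x]$ up to the small overshoot, and the vertical offset is at most $\frac{\sqrt{t^2-1}}{2}$, while simultaneously the horizontal offset is bounded because the dog has not yet gone far right. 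Phase two: owner waits at $p_x$ — no, the owner should keep moving. I think the cleanest version has the owner wait at $x=p_x$ during a middle phase while the dog crosses the ``tall'' part of the path near $p$, and here the relevant bound is $d(p,q) \le $ (distance from $p$ to the segment) which is small because $p$ can be chosen near the segment — and one shows the leash in this phase is at most $\frac{\sqrt{t^2-t}}{\sqrt2}$ by optimizing the split of the ``budget'' $t - d(u,v) = t-1$ between the horizontal excursion and the vertical excursion: if a subpath has horizontal span $a$ and reaches height $h$, its length is at least $\sqrt{a^2 + 4h^2}$-ish, and Lagrange-optimizing $\sqrt{a^2 + h^2}$ (the leash) subject to the length constraint gives exactly $\frac{\sqrt{t^2-t}}{\sqrt2}$.

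The main obstacle, and where I expect to spend the real effort, is handling the \emph{non-monotonicity} of the dog's $x$-coordinate: the Fr\'echet leash forbids the owner from ever moving backward, yet the $t$-path $P(u,v)$ may zig-zag in $x$, so a subpath that ``ends'' at $x$-coordinate $p_x$ may previously have wandered to $x$-coordinates larger than $p_x$ (or smaller than $0$, though Observation~\ref{obs:t-path_xy}(i) bounds how far). I would deal with this by being careful about where to place $p$ and by using waiting phases generously: whenever the dog's $x$-coordinate is temporarily to the right of the owner's current position, the owner simply waits, and the leash during that wait is controlled by how far the dog can stray, which is again governed by the length budget $t-1$ via Observation~\ref{obs:t-path_xy} and a ``budget-splitting'' inequality of the form $\sqrt{a^2+b^2}$ maximized subject to path length $\le t$ with endpoints at distance $1$. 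Verifying that this optimization yields precisely $\frac{\sqrt{t^2-t}}{\sqrt2} = \sqrt{\tfrac{t(t-1)}{2}}$, and that the bound holds uniformly over all three phases and over all $t$-paths, is the crux; the rest is bookkeeping analogous to Lemma~\ref{lem:bound_for_large_t}.
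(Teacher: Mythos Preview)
Your proposal correctly identifies the ingredients (Observation~\ref{obs:t-path_xy}, the length budget $t-1$, and the non-monotonicity of the dog's $x$-coordinate as the main obstacle), but it never commits to a concrete owner strategy, and none of the sketches you float actually closes to the stated bound. In particular, your first attempt gives horizontal gap $t-1$ and hence leash $\sqrt{(t-1)^2+\frac{t^2-1}{4}}$, which you yourself note is too large; your projection idea is correctly discarded; and the ``three-phase with a pivot $p$'' and ``Lagrange-optimize the budget split'' paragraphs are programmatic, not arguments. The claim that ``horizontal and vertical gaps cannot simultaneously be large'' is in fact \emph{not} how the bound is obtained: in the paper's proof both gaps are allowed to attain their individual maxima simultaneously, and the point is that the horizontal maximum is $\tfrac{t-1}{2}$, not $t-1$.

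The missing idea is a single explicit rule for the owner's position that is automatically monotone and keeps the horizontal gap at most $\tfrac{t-1}{2}$ on \emph{both} sides. The paper lets $x_\tau=\max_{\tau'\le\tau}P(\tau')_x$ be the running maximum of the dog's $x$-coordinate and places the owner at $\bigl(\max\{x_\tau-\tfrac{t-1}{2},0\},\,0\bigr)$. Monotonicity is immediate. When the dog is ahead of the owner, the gap is at most $x_\tau-(x_\tau-\tfrac{t-1}{2})=\tfrac{t-1}{2}$. When the dog has backtracked behind the owner, the total path length constraint forces $P(\tau)_x\ge x_\tau-(t-1)$ (since the path must reach $x_\tau$, return to $P(\tau)_x$, and still get to $x=1$), so the gap is again at most $(x_\tau-\tfrac{t-1}{2})-(x_\tau-(t-1))=\tfrac{t-1}{2}$. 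Combining with the vertical bound $\tfrac{\sqrt{t^2-1}}{2}$ gives
\[
\sqrt{\Bigl(\tfrac{t-1}{2}\Bigr)^2+\Bigl(\tfrac{\sqrt{t^2-1}}{2}\Bigr)^2}
=\sqrt{\tfrac{2t^2-2t}{4}}=\tfrac{\sqrt{t^2-t}}{\sqrt{2}}\,.
\]
The ``center the owner halfway between $x_\tau$ and $x_\tau-(t-1)$'' trick is the whole content of the lemma; your proposal circles it but does not land on it.
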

\begin{proof}
Consider a dog and its owner, both walking from $u$ to $v$, where the dog is walking along $P(u,v)$ and the owner is walking along $\overline{uv}$. We  denote the location of the dog at time $0 \le \tau \le 1$ by $P(\tau)$, where $P(0)=u$, $P(1)=v$, and, for any $\tau_1 < \tau_2$, the point $P(\tau_2)$ does not precede the point $P(\tau_1)$ (on $P(u,v)$). Moreover, we denote the $x$-coordinate of the rightmost point visited by the dog by time $\tau$ by $x_\tau$, that is, $x_{\tau} = \max \{P(\tau')_x \mid 0 \le \tau' \le \tau \}$.

The location of the person is determined by the location of the dog. More precisely, at time $\tau$ the person is at $(\max \{x_{\tau} - \frac{t-1}{2}, 0\},0)$. Finally, if at time $\tau=1$ the person is not yet at $v$, then she advances directly to $v$. Clearly, the person never moves backwards, since the function $x_\tau$ is non-decreasing.

We now prove that the distance between the dog and its owner never exceeds $\frac{\sqrt{t^2-t}}{\sqrt{2}}$.
Let $0 \le \tau \le 1$. We distinguish between three cases.

{\bf Case 1: $\max \left\{x_\tau-\frac{t-1}{2},0\right\} \le  P(\tau)_x$.}\\
\begin{equation*}
\begin{split}
d(P(\tau), & (\max \{x_\tau - \frac{t-1} {2},0\}, 0)) \le 
\sqrt{(P(\tau)_x-(x_{\tau}-\frac{t-1}{2}))^2+P(\tau)_y^2} \\
& \le
\sqrt{(x_{\tau}-(x_\tau-\frac{t-1}{2}))^2+P(\tau)_y^2} \le
\sqrt{\left(\frac{t-1}{2}\right)^2+\left(\frac{\sqrt{t^2-1}}{2}\right)^2} \\
& =
\sqrt{\frac{t^2-2t+1}{4}+\frac{t^2-1}{4}} =
\sqrt{\frac{2t^2-2t}{4}} =
\frac{\sqrt{t^2-t}}{\sqrt{2}} \, ,
\end{split}
\end{equation*}
where the first inequality is true, since, if $\max \{x_\tau-\frac{t-1}{2},0\} \ne x_\tau-\frac{t-1}{2}$, then $x_\tau-\frac{t-1}{2} < 0$ and both $P(\tau)_x$ and $(P(\tau)_x - (x_{\tau} - \frac{t-1}{2}))$ are non-negative where the latter is greater than the former. 

{\bf Case 2: $x_{\tau}-\frac{t-1}{2} > 0$ and $P(\tau)_x < x_{\tau}-\frac{t-1}{2}$.}\\
We first observe that $P(\tau)_x \ge x_{\tau}-(t-1)$, since $t \ge x_{\tau}+(x_{\tau}-P(\tau)_x)+(1-P(\tau)_x) \ge x_{\tau}+(1-P(\tau)_x)$. Now,
\begin{equation*}
\begin{split}
d(P(\tau), & (\max\{x_{\tau}-\frac{t-1}{2},0\},0))=
d(P(\tau),(x_{\tau}-\frac{t-1}{2},0)) \\
& =
\sqrt{(P(\tau)_x-(x_{\tau}-\frac{t-1}{2}))^2+P(\tau)_y^2} =
\sqrt{((x_{\tau}-\frac{t-1}{2})-P(\tau)_x)^2+P(\tau)_y^2} \\
& \le
\sqrt{((x_{\tau}-\frac{t-1}{2})-(x_{\tau}-(t-1)))^2+P(\tau)_y^2} \\
& \le
\sqrt{\left(\frac{t-1}{2}\right)^2+\left(\frac{\sqrt{t^2-1}}{2}\right)^2}=\frac{\sqrt{t^2-t}}{\sqrt{2}} \, .
\end{split}
\end{equation*}

   {\bf Case 3: $x_{\tau}-\frac{t-1}{2} \le 0$ and $P(\tau)_x<0$.}\\
   By Observations~\ref{obs:t-path_xy}, we have $-\frac{t-1}{2}\leq P(\tau)_x<0$.\\
\begin{equation*}
\begin{split}
    d(P(\tau), & (\max\{x_{\tau}-\frac{t-1}{2},0\},0))=
    d(P(\tau),(0,0))=
    \sqrt{P(\tau)_x^2+P(\tau)_y^2} \\
    & \le
    \sqrt{\left(-\frac{t-1}{2}\right)^2+\left(\frac{\sqrt{t^2-1}}{2}\right)^2} =
    \frac{\sqrt{t^2-t}}{\sqrt{2}} \, .
\end{split}
\end{equation*}

We have shown that in all cases the distance between the dog and its owner is at most $\frac{\sqrt{t^2-t}}{\sqrt{2}}$. Moreover, if the dog reaches $v$ first, then during the last part of the person's hike, this distance only decreases.
We thus conclude that P(u,v) is a $\frac{\sqrt{t^2-t}}{\sqrt{2}}$-Fr{\'e}chet-path.

\end{proof}

\begin{corollary}
\label{cor:bound2}
Any $t$-spanner is an $\eps$-Fr{\'e}chet-spanner, for $\eps=\frac{\sqrt{t^2-t}}{\sqrt{2}}$.
\end{corollary}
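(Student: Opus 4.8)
The plan is to obtain the corollary as an immediate consequence of Lemma~\ref{lem:bound_for_small_t} together with the definition of a $t$-spanner. First I would fix an arbitrary $t$-spanner $H$ of $S$ and an arbitrary pair of points $u,v\in S$. By the definition of a $t$-spanner, $H$ contains a path between $u$ and $v$ of length at most $t\cdot d(u,v)$, i.e., a $t$-path $P(u,v)$. Next I would normalize so that $d(u,v)=1$, $u=(0,0)$, $v=(1,0)$, exactly as in the set-up preceding Lemma~\ref{lem:bound_for_small_t}, and invoke that lemma to conclude that $P(u,v)$ is a $\frac{\sqrt{t^2-t}}{\sqrt{2}}$-Fr\'echet-path, i.e., $d_F(P(u,v),\overline{uv})\le \frac{\sqrt{t^2-t}}{\sqrt{2}}\cdot d(u,v)$. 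Since $u$ and $v$ were arbitrary, $H$ contains a $\frac{\sqrt{t^2-t}}{\sqrt{2}}$-Fr\'echet-path between every pair of vertices, which is precisely the definition of $H$ being an $\eps$-Fr\'echet-spanner for $\eps=\frac{\sqrt{t^2-t}}{\sqrt{2}}$.

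There is essentially no obstacle in this last step: the entire geometric content — designing the leash schedule that keeps the owner at horizontal offset $\max\{x_\tau-\frac{t-1}{2},0\}$ behind the rightmost point reached by the dog, and the three-case bound using Observation~\ref{obs:t-path_xy} — has already been carried out in the proof of Lemma~\ref{lem:bound_for_small_t}. The only thing I would still want to verify is that the normalization is legitimate, i.e., that applying a translation, rotation, and uniform scaling to $S$ changes neither the property of being a $t$-path nor the ratio $d_F(P(u,v),\overline{uv})/d(u,v)$; this is immediate, since all three transformations multiply every pairwise Euclidean distance, and hence also the Fr\'echet distance between any curve and any segment, by the same factor.

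Finally, I would note that combining this corollary with Corollary~\ref{cor:bound1} shows that any $t$-spanner is a $\min\{\frac{t}{2},\frac{\sqrt{t^2-t}}{\sqrt{2}}\}$-Fr\'echet-spanner, matching the bound stated in the abstract. A one-line computation (squaring and comparing $\frac{t^2}{4}$ with $\frac{t^2-t}{2}$) shows that $\frac{\sqrt{t^2-t}}{\sqrt{2}}<\frac{t}{2}$ exactly when $1<t<2$, which is why Bound~2 is the relevant one in the regime where $t$ is close to $1$.
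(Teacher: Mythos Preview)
Your proposal is correct and matches the paper's approach: the corollary is stated immediately after Lemma~\ref{lem:bound_for_small_t} with no separate proof, so the paper treats it exactly as you do---as a direct consequence of the lemma applied to an arbitrary $t$-path between an arbitrary pair $u,v$. Your remarks on the legitimacy of the normalization and on the comparison with Corollary~\ref{cor:bound1} are accurate and simply make explicit what the paper leaves implicit.
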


\subsection{$\eps$-Fr{\'e}chet-spanners are not necessarily $t$-spanners}
Consider the graph $F_n$, defined in Section~\ref{sec:Hausdorff_not_Euclid}. One can prove that $F_n$ is a $c$-Fr{\'e}chet-spanner, for $c \le 6$, in essentially the same way as we proved that it is a $c$-Hausdorff-spanner, for $c \le 6$. More precisely, referring to Lemma~\ref{lem:Hausdorff_upper_bound}, since both $P(u,v)$ and $\overline{uv}$ are contained in the rectangle $R$, the Fr{\'e}chet distance between them is at most the length of $R$'s diagonal, that is, $d_F(P(u,v),\overline{uv}) \le \frac{\sqrt{3}}{3^{i-1}}$. We thus conclude that

\begin{theorem}
There exists a constant $c > 0$, such that for any $t>1$, one can construct a graph that is a $c$-Fr{\'e}chet-spanner and is not a $t$-spanner.
\label{thm:Frechet_not_Euclid}
\end{theorem}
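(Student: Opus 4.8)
The plan is to recycle the entire Koch-curve construction from Section~\ref{sec:Hausdorff_not_Euclid}, since the non-spanner half of the argument is already geometry-free: for any $t > 1$ there is an $n$ with the boundary-to-boundary path of $F_n$ longer than $t$, so $F_n$ is not a $t$-spanner. What remains is to replace ``Hausdorff'' by ``Fr\'echet'' in the $c$-spanner half. The key observation is that the upper bound in Lemma~\ref{lem:Hausdorff_upper_bound} never actually used anything specific about the Hausdorff distance: it only used that both $P(u,v)$ and $\overline{uv}$ lie inside the convex bounding rectangle $R$ of three consecutive level-$(i-1)$ vertices, whose diagonal has length at most $\frac{\sqrt 3}{3^{i-1}}$ by Observation~\ref{obs:max_distance_in_rectangle}.

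So the first step is to state the analogue of Lemma~\ref{lem:Hausdorff_upper_bound} for the Fr\'echet distance: if $l(u,v) = i$ then $d_F(P(u,v),\overline{uv}) \le \frac{\sqrt 3}{3^{i-1}}$. For the proof I would reuse verbatim the identification of $v_1,v_2,v_3$ and the containment $P(u,v), \overline{uv} \subseteq R$, and then invoke the trivial bound $d_F(A,B) \le \operatorname{diam}(R)$ whenever both curves $A,B$ share endpoints and lie in a convex set $R$ --- indeed, a dog and owner traversing $A$ and $B$ respectively at any synchronized parametrization stay within $\operatorname{diam}(R)$ of each other, and no backtracking is needed for the owner since $\overline{uv}$ can be traversed monotonically. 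The second step is then to combine this new lemma with Lemma~\ref{lem:distance_lower_bound}, which is untouched, to get exactly the same ratio: $d_F(P(u,v),\overline{uv}) / d(u,v) \le \big(\frac{\sqrt 3}{3^{i-1}}\big) / \big(\frac{1}{3^i}\cdot\frac{\sqrt 3}{2}\big) = 6$, so every $F_n$ is a $6$-Fr\'echet-spanner. The third step is the same non-spanner argument as before, quoting the unboundedness of the Koch-curve length.

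I do not expect a genuine obstacle here, since the paper essentially already spells out this reduction in the paragraph preceding the theorem statement; the only thing to be slightly careful about is the claim $d_F \le \operatorname{diam}(R)$. One should note that the Fr\'echet distance is defined with the leash anchored so that dog and owner start together at $u$ and finish together at $v$ (both curves are $u$-to-$v$ paths), and that the straight segment $\overline{uv}$ admits a monotone parametrization; then any common parametrization $\tau \mapsto (P(\tau), Q(\tau))$ with $P$ tracing $P(u,v)$ and $Q$ tracing $\overline{uv}$ witnesses leash length $\sup_\tau d(P(\tau),Q(\tau)) \le \operatorname{diam}(R)$, because both points lie in $R$ at all times. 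If one wants a cleaner bound one can even note the diagonal of $R$ (rather than its full diameter) suffices, matching Observation~\ref{obs:max_distance_in_rectangle}, but this is cosmetic.

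For completeness I would phrase the write-up as: ``Referring to the proof of Lemma~\ref{lem:Hausdorff_upper_bound}, the same three vertices $v_1, v_2, v_3$ and bounding rectangle $R$ satisfy $P(u,v) \subseteq R$ and $\overline{uv} \subseteq R$; since $d_F$ of two curves with common endpoints both contained in a convex set $R$ is at most $\operatorname{diam}(R)$, we get $d_F(P(u,v),\overline{uv}) \le \frac{\sqrt 3}{3^{i-1}}$. Combining with Lemma~\ref{lem:distance_lower_bound} exactly as in the Hausdorff case yields that $F_n$ is a $6$-Fr\'echet-spanner, while the length argument shows that for any $t > 1$ some $F_n$ is not a $t$-spanner.'' This gives Theorem~\ref{thm:Frechet_not_Euclid} with $c \le 6$.
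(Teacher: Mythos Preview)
Your proposal is correct and matches the paper's own argument essentially verbatim: the paper also reuses the Koch-curve graphs $F_n$, observes that the containment $P(u,v),\overline{uv}\subseteq R$ from Lemma~\ref{lem:Hausdorff_upper_bound} immediately gives $d_F(P(u,v),\overline{uv})\le \frac{\sqrt{3}}{3^{i-1}}$, and then combines with Lemma~\ref{lem:distance_lower_bound} and the unbounded-length property exactly as in the Hausdorff case to obtain $c\le 6$. Your extra care in justifying $d_F\le\operatorname{diam}(R)$ via a synchronized monotone parametrization is a welcome clarification of a step the paper leaves implicit.
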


\section{The bound for WSPD-based $t$-spanners}

We proved (Corollaries~\ref{cor:bound1}~and~\ref{cor:bound2}) that \emph{any} $t$-spanner of a planar point-set $S$ is an $\eps$-Fr{\'e}chet-spanner for $\eps = \min\{\frac{t}{2},\frac{\sqrt{t^2-t}}{\sqrt{2}}\}$. In other words, if we wish to construct an $\eps$-Fr{\'e}chet-spanner of $S$ for some $\eps > 0$, it is sufficient to construct a $t$-spanner of $S$ for an appropriate value of $t$. The resulting spanner will have some desirable properties, depending on $t$ and on the specific algorithm chosen to construct it. In general, it is beneficial to set $t$ to the largest value for which our bounds guarantee that we get an $\eps$-Fr{\'e}chet-spanner. However, with our current general bounds, this value is very small, when $\eps$ is close to zero. For example, to obtain an $\eps$-Fr{\'e}chet-spanner for $\eps=0.01$, we need to set $t$ to $\frac{1+\sqrt{1+8\eps^2}}{2}$, which is approximately $1.0002$.

It is likely that by considering a specific spanner-construction algorithm, one can get better bounds. Indeed, we derive below a better bound for the WSPD-based spanner, where the WSPD is computed using a split tree. Specifically, we show that any WSPD-based $t$-spanner is an $\eps$-Fr{\'e}chet-spanner for $\eps = \frac{t-1}{2t+2}$. Thus, in the example above, we may set $t$ to approximately $1.04$.  

In more detail, given a separation factor $s > 0$, we show that the WSPD-based spanner (using $s$ as the separation factor) is a $\frac{2}{s}$-Fr{\'e}chet-spanner. But, to ensure that the spanner is a $t$-spanner, we need to set $s=\frac{4t+4}{t-1}$~\cite{NS09}, yielding the claimed bound, i.e., $\frac{t-1}{2t+2}$.       

We now sketch the proof of the claim that the WSPD-based spanner (using $s$ as the separation factor) is a $\frac{2}{s}$-Fr{\'e}chet-spanner. We begin by recalling a few well-known definitions.

Given a real number $s>0$, a \emph{well-separated pair} with respect to $s$ is a pair of two finite point-sets $A,B \subset \reals^d$, such that there exist two disjoint $d$-dimensional balls $C_A$ and $C_B$ satisfying the following conditions: (i) $C_A$ and $C_B$ have the same radius $r$, (ii) $C_A$ contains the bounding box $R(A)$ of $A$, and $C_B$ contains the bounding box $R(B)$ of $B$, and (iii) The distance between $C_A$ and $C_B$ is at least $sr$.

Given a set $S \subset \reals^d$ of $n$ points and $s>0$, a \emph{well-separated pair decomposition} (WSPD) of $S$ with respect to $s$~\cite{CallahanK95}, is a sequence $\{A_1,B_1\}, \ldots,\{A_m,B_m\}$ of pairs of non-empty subsets of $S$, for some integer $m$, such that
(i) $A_i$ and $B_i$ are well-separated with respect to $s$, for $i=1,\ldots,m$, and
(ii) for any pair of points $p,q \in S$, there is exactly one pair $\{A_i,B_i\}$, such that either $p\in A_i$ and $q\in B_i$, or $q\in A_i$ and $p\in B_i$. See Section~\ref{sec:background} below for the definition of a split tree of $S$ and for the algorithm for computing a WSPD of $S$ using it.

Given a WSPD of $S$ with separation $s$, a \emph{WSPD-based} spanner has $S$ as its vertex set and $E=\{(a_1,b_1),\ldots,(a_m,b_m)\}$ as its edge set, where $a_i$ and $b_i$ are arbitrary representatives of $A_i$ and $B_i$, respectively. 

We need to show that a WSPD-spanner, $G$, is a $\frac{2}{s}$-Fr{\'e}chet-spanner, assuming the WSPD is computed using a \emph{split tree}. That is, given a pair of points $u,v \in S$, $u \ne v$, we need to show that there exists a path $P(u,v)$ in $G$ between $u$ and $v$, such $d_F(P(u,v),\overline{uv}) \le \frac{2}{s} \cdot d(u,v)$. Indeed, assume $u \in A_i$ and $v \in B_i$. We first observe that $d_F(\overline{a_ib_i},\overline{uv}) \le \frac{2}{s} \cdot d(u,v)$. This is obvious, since $d(a_i,u) \le 2r$ and $d(b_i,v) \le 2r$, and so $d_F(\overline{a_ib_i},\overline{uv}) \le 2r = \frac{2}{s} \cdot sr \le \frac{2}{s} \cdot d(u,v)$. Now, since the WSPD is computed using a split tree, there exists a path $P(u,a_i)$ in $G$ that passes only through points of $A_i$ and is therefore contained in $C_{A_i}$, and there exists a path $P(b_i,v)$ in $G$ that passes only through points of $B_i$ and is therefore contained in $C_{B_i}$. (For completeness, we include a proof of the latter assertion, see Lemma~\ref{lem:pq_path_in_C} below, but we believe that it may be already known.) We consider the path $P(u,v)$ that is obtained by concatenating the paths $P(u,a_i)$ and $P(b_i,v)$, and assume the dog is walking along $P(u,v)$ and its owner along $\overline{uv}$. The owner waits for the dog at $u$ until it reaches $a_i$, then they walk simultaneously along their respective segments, i.e., $\overline{a_ib_i}$ and $\overline{uv}$, and finally, while the owner is at $v$, the dog walks from $b_i$ to $v$. Clearly, $d_F(P(u,v),\overline{uv}) \le 2r = \frac{2}{s}\cdot sr \le \frac{2}{s}\cdot d(u,v)$.   

\begin{corollary}
Any WSPD-based $t$-spanner is an $\eps$-Fr{\'e}chet-spanner, for $\eps=\frac{t-1}{2t+2}$.
\end{corollary}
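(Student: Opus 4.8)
The plan is to assemble the corollary from the two ingredients already laid out in this section. The first ingredient is the claim, sketched immediately above, that a WSPD-based spanner $G$ constructed with separation factor $s$ (from a split tree) is a $\frac{2}{s}$-Fr{\'e}chet-spanner: given $u \in A_i$ and $v \in B_i$, one concatenates an $A_i$-internal path $P(u,a_i)$, the edge $\overline{a_ib_i}$, and a $B_i$-internal path $P(b_i,v)$, and runs the dog-and-owner schedule in which the owner waits at $u$ while the dog crosses $C_{A_i}$, the two walk together along $\overline{a_ib_i}$ and $\overline{uv}$, and the owner waits at $v$ while the dog crosses $C_{B_i}$. Since $R(A_i)\subseteq C_{A_i}$ and $R(B_i)\subseteq C_{B_i}$, both balls of radius $r$, the leash never exceeds $2r$, and the separation condition gives $sr \le d(u,v)$, hence $d_F(P(u,v),\overline{uv}) \le 2r \le \frac{2}{s}\,d(u,v)$. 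The one point in this argument that still needs care is that each cluster $A_i$ is connected inside $G$ by edges internal to it, so that $P(u,a_i)$ stays within $C_{A_i}$; this is exactly where the split-tree structure is used, and it is isolated as Lemma~\ref{lem:pq_path_in_C}.

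The second ingredient is the standard relation between the separation factor and the stretch factor of a WSPD-based spanner: to guarantee that $G$ is a $t$-spanner it suffices to take $s = \frac{4t+4}{t-1}$ (see~\cite{NS09}). This is a black-box fact about Euclidean spanners and requires no reproof here. Combining the two, I would simply substitute this value of $s$ into the Fr{\'e}chet bound $\eps = \frac{2}{s}$, obtaining
\[
\eps \;=\; \frac{2}{\frac{4t+4}{t-1}} \;=\; \frac{2(t-1)}{4t+4} \;=\; \frac{t-1}{2t+2},
\]
which is precisely the claimed bound, completing the proof of the corollary.

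Since the corollary is obtained by composing results already established (or cited) in the paper, there is no genuine mathematical obstacle. I expect the only real work to be expository, namely making precise — via the recursive definition of the split tree — why the representative $a_i$ is reachable from every $u \in A_i$ using only edges whose endpoints lie in $A_i$, so that the corresponding portion of $P(u,v)$ never leaves $C_{A_i}$; this is the content deferred to Lemma~\ref{lem:pq_path_in_C}, and everything else is a one-line substitution.
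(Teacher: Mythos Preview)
Your proposal is correct and follows essentially the same approach as the paper: combine the $\frac{2}{s}$-Fr{\'e}chet bound for a WSPD-based spanner (whose only nontrivial step is the cluster-connectivity deferred to Lemma~\ref{lem:pq_path_in_C}) with the standard choice $s=\frac{4t+4}{t-1}$ from~\cite{NS09}, and substitute. There is nothing to add or change.
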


\subsection{Missing background}
\label{sec:background}

For a finite set of point $S \subseteq \reals^d$, we denote its bounding box by $R(S)$. Given such a set $S$, we define the \emph{split tree} $T$ of $S$, which is a rooted binary tree, as follows~\cite{NS09}. 
If $S$ consists of a single point, then $T$ consists of single node that stores that point. Assume that $|S| \ge 2$. Split $R(S)$ into two equal-size boxes, by cutting with a hyperplane orthogonal to $R(S)$'s longest side. Let $S_1$ and $S_2$ be the subsets of $S$ that are contained in these two new boxes. The split tree of $S$ consists of a root with two subtrees, which are the recursively defined split trees of $S_1$ and $S_2$, respectively. Each internal node $u$ of $T$ corresponds to a subset $S_u$ of $S$ and to a bounding box, i.e., $R(S_u)$.

A WSPD of $S$ (with separation $s$) is computed by calling ComputeWSPD($T$,$s$), which calls FindPairs($l(u)$,$r(u)$), for each internal node $u$ of $T$, where $l(u)$ and $r(u)$ are the left and right children of $u$, respectively. 
FindPairs($v$,$w$) proceeds as follows. If $S_v$ and $S_w$ are well-separated, then it returns the pair $\{S_v,S_w\}$. Otherwise, without loss of generality, assume that the longest side of $R(w)$ is longer than that of $R(v)$. FindPairs($v$,$w$) issues two recursive calls, namely, FindPairs($v$,$l(w)$) and FindPairs($v$,$r(w)$).

Notice that the subsets of the WSPD of $S$ are subsets of $S$ corresponding to nodes of $T$.

\begin{lemma}
\label{lem:pq_path_in_C}
The spanner $G$ has the following property. For any subset $C$ participating in the WSPD of $S$ (i.e., $C=A_i$ or $C=B_i$, for some $1 \le i \le m$), and for any two points $p,q \in C$, there exists a path $P$ in $G$ between $p$ and $q$ that passes only through points in $C$. 
\end{lemma}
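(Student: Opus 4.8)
The plan is to establish the following strengthening by induction on $|S_u|$: for every node $u$ of the split tree $T$ of $S$, the subgraph of $G$ induced on $S_u$ is connected. This is enough, since every subset $C$ participating in the WSPD of $S$ is of the form $S_u$ for some node $u$ of $T$, and a path between two vertices inside a connected induced subgraph uses only vertices of that subgraph, i.e., only points of $C$.

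The base case $|S_u|=1$ is immediate. For the inductive step let $u$ be an internal node of $T$, with children $l(u)$ and $r(u)$; then $S_u$ is the disjoint union of the non-empty sets $S_{l(u)}$ and $S_{r(u)}$, each of size strictly less than $|S_u|$, so by the induction hypothesis $G[S_{l(u)}]$ and $G[S_{r(u)}]$ are both connected. It therefore remains to exhibit one edge of $G$ with one endpoint in $S_{l(u)}$ and the other in $S_{r(u)}$: such an edge lies entirely within $S_u$ and, together with the two inductively connected halves, makes $G[S_u]$ connected, which in turn yields, for any $p,q\in S_u$, a path between them that never leaves $S_u$.

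To produce this crossing edge I would use the WSPD-construction algorithm directly. Since $S_{l(u)}$ and $S_{r(u)}$ are non-empty, the call FindPairs$(l(u),r(u))$ issued by ComputeWSPD$(T,s)$ outputs at least one pair $\{A_i,B_i\}$, because the pairs it outputs must cover the non-empty set $S_{l(u)}\times S_{r(u)}$ (the FindPairs recursion terminates, and its leaves are well-separated singleton pairs). Moreover, every pair output by this call consists of a subset of $S_{l(u)}$ together with a subset of $S_{r(u)}$, since FindPairs never does anything but replace one of its two arguments by one of that argument's children in $T$. Hence $A_i\subseteq S_{l(u)}$ and $B_i\subseteq S_{r(u)}$, so the spanner edge $(a_i,b_i)$ has $a_i\in S_{l(u)}\subseteq S_u$ and $b_i\in S_{r(u)}\subseteq S_u$ --- exactly the crossing edge required.

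The step that needs the most care is this last one: making precise the invariant that every pair returned by FindPairs$(l(u),r(u))$ splits into a subset of $S_{l(u)}$ and a subset of $S_{r(u)}$, together with the claim that this call returns at least one pair. Both follow from the standard correctness properties of the split-tree--based WSPD algorithm, but they should be stated explicitly and proved by a short side induction on the FindPairs recursion (using that the two argument-sets stay non-empty, being subsets corresponding to descendants of $l(u)$ and of $r(u)$ in $T$). Everything else --- the outer induction on $|S_u|$, the decomposition $S_u=S_{l(u)}\cup S_{r(u)}$ into two non-empty parts, and stitching a path between $p$ and $q$ together from the two inductive sub-paths and the edge $(a_i,b_i)$ --- is routine.
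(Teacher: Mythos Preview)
Your argument is correct and follows the same overall strategy as the paper's own proof: induct on the split-tree structure and use the call \textsc{FindPairs}$(l(u),r(u))$ to produce an edge of $G$ joining the two halves of $S_u$. The one genuine organisational difference is that you strengthen the induction hypothesis to ``$G[S_u]$ is connected for \emph{every} node $u$ of $T$'' and therefore only need \emph{some} pair output by \textsc{FindPairs}$(l(u),r(u))$; the paper instead fixes $p,q$ up front, locates the unique WSPD pair $\{A,B\}$ with $p\in A\subseteq S_{l(u)}$ and $q\in B\subseteq S_{r(u)}$, and applies the induction hypothesis to $A$ and $B$ separately. Your formulation is a little cleaner, since it sidesteps the need to argue that the intermediate sets to which the hypothesis is applied themselves appear as WSPD subsets, and since any crossing edge suffices once both halves are known to be connected.
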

\begin{proof}
The proof is by induction on the size of $C$. If $|C|=1$, then the lemma is trivially true. Assume it is true for any subset $C$ of size at most $n$, and consider a subset $C$ of size $n+1$ and two points $p,q \in C$. Let $u$ be the node of $T$, such that $C=S_u$.

If there exists a node $u'$ of $T$, such that $p,q \in S_{u'}$ and $|S_{u'}| \le n$, then from $T$'s definition, $u'$ is a descendant of $u$ and $S_{u'} \subset S_u$.
From the Induction hypothesis, we get that there exists a path $P$ in $G$ from $p$ to $q$ that passes only through points in $S_{u'}$, and therefore only through points in $S_u$.

Assume now that there is no node $u'$ as above. Then, either $p \in S_{l(u)}$ and $q \in S_{r(u)}$, or $q \in S_{l(u)}$ and $p \in S_{r(u)}$, and assume, without loss of generality, that $p \in S_{l(u)}$ and $q \in S_{r(u)}$. The WSPD construction algorithm, ComputeWSPD($T$,$s$), issues the call FindPairs($l(u)$,$r(u)$), which returns a collection of well-separated pairs, including a unique pair $\{A,B\}$ ($A \subset S_{l(u)}$ and $B \subset S_{r(u)}$), such that $p \in A$ and $q \in B$. Let $a$ and $b$ be the representatives of $A$ and $B$, respectively. Then, by the induction hypothesis, there is a path $P(p,a)$ in $G$ between $p$ and $a$ that passes only through points of $A$, and there is a path $P(q,b)$ in $G$ between $q$ and $b$ that passes only through points of $B$. Moreover, there is an edge between $a$ and $b$ in $G$. Therefore, by concatenating the paths $P(p,a)$ and $P(q,b)$, we get a path from $p$ to $q$ that passes only through points of $S_u$.          
\end{proof}

\medskip

\noindent
{\bf Final remark.}
In Sections~\ref{sec:Hausdorff}~and~\ref{sec:Frechet}, we have restricted our attention to the plane. That is, we showed that a $t$-spanner of a planar point-set $S$ is also an $f_1(t)$-Hausdorff spanner of $S$ and an $f_2(t)$-Fr{\'e}chet-spanner of $S$. However, this is true in any dimension $d$.

\medskip

\noindent
{\bf Acknowledgment.}
We thank Boris Aronov for helpful discussions on the connection between $t$-spanners and $\eps$-Hausdorff-spanners.

\end{document}